\let\set\mathbb
\def\<#1>{\langle#1\rangle}
\def\disp{\operatorname{disp}}
\newtheorem*{lemma*}{Lemma}
\newtheorem{theorem}{Theorem}
\newtheorem{prop}[theorem]{Proposition}
\newtheorem{thm}[theorem]{Theorem}
\newtheorem{corollary}[theorem]{Corollary}
\newtheorem{lemma}[theorem]{Lemma}
\newtheorem{remark}[theorem]{Remark}
\newtheorem{defi}[theorem]{Definition}
\newtheorem{example}[theorem]{Example}
\newcommand{\blue}{\color{blue}}
\newcommand{\black}{\color{black}}
\newcommand{\myitem}[1]{%
	\item[(#1)]\protected@edef\@currentlabel{#1}%
}
\def\eatspace#1{#1}
\def\step#1#2{\par\kern1pt\hangindent#2em\hangafter=1\noindent\rlap{\small#1}\kern#2em\relax\eatspace}
\let\set\mathbb
\def\<#1>{\langle#1\rangle}
\def\OO{\mathcal{O}}
\def\val{\operatorname{val}}
\def\sol{\operatorname{Sol}}
\def\disc{\operatorname{Disc}}
\def\diag{\operatorname{diag}}
\def\lc{\operatorname{lc}}
\def\a{\alpha}
\begin{document}
\let\completeproof=f

\title{Reduction-based Creative Telescoping for P-recursive Sequences via Integral Bases
	\thanks{S.\ Chen was partially supported by the NSFC grants (No.\ 12271511 and No.\ 11688101), CAS Project for Young Scientists in Basic Research (Grant No.\ YSBR-034), the Fund of the Youth Innovation Promotion Association (Grant No.\ Y2022001), CAS, and the National Key Research and Development Project 2020YFA0712300. L.\ Du was supported by the Austrian FWF grant P31571-N32. M.\ Kauers was supported by the
	Austrian FWF grants P31571-N32 and I6130-N. R.\ Wang was supported by the NSFC grants (No. 12101449 and No. 12271511) , and the Natural Science Foundation of Tianjin, China (No. 22JCQNJC00440).
	}}

\author{  \bigskip
	Shaoshi Chen$^{1,2}$,\,  Lixin Du$^3$,\, Manuel Kauers$^3$,\, Rong-Hua Wang$^4$\\
	\normalsize
	$^1$KLMM,\, Academy of Mathematics and Systems Science,\\ \normalsize Chinese Academy of Sciences, Beijing, 100190, China.\\
	\normalsize $^2$School of Mathematical Sciences, University of Chinese Academy of Sciences, \\\normalsize Beijing, 100049, China\\
	\normalsize $^3$Institute for Algebra, Johannes Kepler University, Linz, A4040,  Austria\\
	\normalsize $^4$School of Mathematical Sciences, Tiangong University, Tianjin, 300387, China\\
	{\sf \normalsize schen@amss.ac.cn,  lixindumath@gmail.com, manuel.kauers@jku.at, wangronghua@tiangong.edu.cn}\\ \\
	Dedicated to the memory of Marko Petkov\v sek.
}
\date{}
\maketitle
\begin{abstract}
  We propose a way to split a given bivariate P-recursive sequence into a summable part
  and a non-summable part in such a way that the non-summable part is minimal in some
  sense. This decomposition gives rise to a new reduction-based creative telescoping
  algorithm based on the concept of integral bases.

  \medskip
  \emph{Keywords.} Abramov-Petkov{\v s}ek reduction; P-recursive sequences; Symbolic summation.
\end{abstract}

\section{Introduction}

With their book $A=B$~\cite{PWZbook1996}, Petkov\v sek, Wilf, and Zeilberger marked a milestone in the development of symbolic summation.
At the beginning of the 1990s, with the appearance of creative telescoping and Petkov\v sek's algorithm, it seemed that the
last word on summation in finite terms had been spoken. The machinery described in the book can decide whether a given a definite
hypergeometric sum can be written as a linear combination of hypergeometric terms. This is not only a remarkable theoretical
result, but it also quickly became an indispensable tool that saves people working in all areas of discrete mathematics from the
struggle of simplifying hypergeometric sums by hand. However, from today's perspective, it seems that the pioneering work documented in
$A=B$ marks rather the beginning than the end of a period of research on algorithms for summation and integration. In fact, the
development of symbolic summation has continued in several directions.

One direction of research concerns the generalization of the techniques to more sophisticated types of sums and
integrals~\cite[etc.]{Zeilberger1990,chyzak98,chyzak00,chyzak09a,raab12,koutschan13,schneider16,schneider17,schneider21}. 
Thanks to this line of research, we can deal not only with hypergeometric sums, but also with summands and integrands
that are defined in terms of systems of linear differential and difference equations (D-finite functions), summands
and integrands that are defined in terms of differential or difference fields (elementary functions, $\Pi\Sigma$~expressions),
among other classes of functions.

A second direction of research concerns the refinement of the classical algorithms so as to obtain more precise information about
a given summation problem. For example, while Gosper's algorithm~\cite{Gosper1978,PWZbook1996} for indefinite hypergeometric summation can
recognize that a given hypergeometric term is not summable, the Abramov-Petkov\v sek reduction~\cite{AbramovPetkovsek01} goes one step
further and extracts from any given hypergeometric term a maximal summable part. This is in a way a hypergeometric summation analog to
the classical Hermite reduction for indefinite integration of rational functions, which writes any given rational function as the sum
of an integrable rational function and a proper rational function with a squarefree denominator.

A third direction of research concerns efficiency. Algorithms for creative telescoping produce a so-called telescoper and a so-called
certificate. The certificate tends to be much larger than the telescoper and is often not needed. Starting from 2010, algorithms
were developed that can construct a telescoper without also constructing the corresponding certificate. This family of algorithms
is now known as ``reduction-based telescoping''. It was first proposed for rational function integration~\cite{bostan10b}, where it relies
on Hermite reduction. It was then generalized to hyperexponential integrands~\cite{bostan13a}, and later in two different ways to D-finite
integrands: one version relies on Lagrange's identity~\cite{bostan18a,vanderHoeven21}, another on integral
bases~\cite{chen16a,chen17a,chen23}. Also for summation,
the case of rational functions was settled first~\cite{chen12b}. Then, using Abramov-Petkov\v sek reduction, it was formulated for the
hypergeometric case~\cite{chen15a}. Meanwhile, there are also reduction-based telescoping algorithms for P-recursive functions based on
Lagrange's identity~\cite{vanderHoeven18b,BrochetBruno23}.

What is now missing to complete the analogy between the summation case and the integration case is a reduction-based algorithm for
P-recursive functions using integral bases. The purpose of the present paper is to introduce such an algorithm. The proposed algorithm
is a direct analog of our recent algorithm for D-finite functions~\cite{chen23}, with the integral bases for the differential case provided
in~\cite{kauers15b,Aldossari20} replaced by the integral bases for the shift case provided in~\cite{chen20} (plus a new notion of
integrality at infinity, cf. Sect.~\ref{sec:ibinf}). Although some of the details are somewhat technical, it turns out that there is a
full analogy between the integration case and the summation case. At the present stage, this is merely a theoretical result. It remains
to be seen how implementations of reduction-based algorithms for D-finite functions or P-recursive sequences based on Lagrange's identity
relate to the corresponding algorithms based on integral bases.

\section{Integral bases}

Let $C$ be a field of characteristic zero and $\bar C$ be the algebraic closure of $C$. Let $C(x)$ be the field of rational functions in $x$. The shift operator $\sigma$ on $C(x)$ is the automorphism such that $\sigma(f(x))=f(x+1)$ for all $f\in C(x)$. Let $C(x)[S]$ be an Ore algebra, where $S$ is the shift operator with respect to $x$ and satisfies the commutation rule $Sf=\sigma(f)S$ for all $f\in C(x)$. The difference operator $S-1$ is denoted by $\Delta$, which satisfies the rule $\Delta f = \sigma(f)\Delta + \Delta(f)$ for all $f\in C(x)$. Let $L=\ell_0 + \ell_1 S +\cdots + \ell_r S^r \in C[x][S]$ with polynomial coefficients $\ell_i\in C[x]$ and $\ell_0\ell_r \neq 0$. Then $r$ is called the {\em order} of $L$. We consider the left $C(x)[S]$-module $A= C(x)[S]/\<L>$, where $\<L>=C(x)[S]L$. When there is no ambiguity, an equivalence class $f+\<L>$ in $A$ is also denoted by~$f$. Every element of $A$ can be uniquely represented by $f = f_0 + f_1 S +\cdots + f_{r-1} S^{r-1}$ with $f_i\in C(x)$. An element $f\in A$ is called {\em summable} in $A$ if there exists $g\in A$ such that $f=\Delta g$.
\subsection{Integral elements at finite places}
Let us recall the value functions and integral bases for P-recursive sequences introduced in~\cite{chen20}. Let $\bar C((q))$ be the field of formal Laurent series in a new indeterminate~$q$. For each $\alpha \in \bar C$, the operator $L=\ell_0 + \ell_1S+\cdots + \ell_rS^r$ acts on a sequence $b\colon\alpha+\set Z\to\bar C((q))$ via
\begin{equation}\label{EQ: Lf}
	(L\cdot b)(z):=\ell_0(z+q)b(z) + \cdots + \ell_r(z+q)b(z+r)
\end{equation}
for all $z\in\alpha+\set Z$. This gives an action of the algebra $C(x)[S]$ on the space $\bar C((q))^{\alpha+\set Z}$ of all sequences $b\colon \alpha+\set Z\to \bar C((q))$. Since $\ell_0\ell_r\neq 0$,  the set
$\sol_{\alpha}(L):=\{\,b\in \bar C((q))^{\alpha+\set Z}\mid L\cdot f=0\,\}$ is a $\bar C((q))$-vector space of
dimension~$r$. It is called the {\em solution space} of $L$ at~$\alpha$. Since $q\notin \bar C$, we have $\ell_0(z+q)\ell_r(z+q)\neq0$ for every $z\in\alpha+\set Z$. So the sequences
$b_1,\dots,b_r\colon \alpha+\set Z \to \bar C((q))$ determined by the recurrence $L\cdot b_i=0$ and the initial values $b_i(\alpha+j)=\delta_{i,j}$ for $i,j=1,\dots,r$ form a basis of $\sol_\alpha(L)$, where $\delta_{i,j}$ is the Kronecker symbol.

The {\em valuation} $\nu_q(a)$ of a nonzero formal Laurent series $a\in \bar C((q))$ is the smallest integer $ m\in\set Z$
such that the coefficient $[q^m]a$ of $q^m$ in $a$ is nonzero. Set $\nu_q(0)=+\infty$. The {\em value function} $\val_\alpha\colon A\to\set Z\cup\{\infty\}$ is defined by
\begin{equation}\label{EQ:val}
\val_\alpha(f) := \min_{b\in\sol_{\alpha}(L)}\Bigl(
     \nu_q((f\cdot b)(\alpha)) - \liminf_{n\to\infty}\nu_q(b(\alpha-n))\Bigr)
\end{equation}
for all $f\in A$. The sequence $f\cdot b$ in~\eqref{EQ:val} is defined as $P_f\cdot b$ via the action~\eqref{EQ: Lf}, if $f = P_f + \<L>$ for some $P_f\in C(x)[S]$. By convention, set $\infty-\infty=\infty$. An element $f\in A$ is called {\em (locally) integral} at $\alpha\in  \bar C$ if $\val_\alpha(f)\geq 0$. Let $Z$ be a subset of $\bar C$. An element $f\in A$ is called {\em (locally) integral} at $Z$ if $\val_\alpha(f)\geq 0$ for all $\alpha\in Z$, i.e., $f$ is locally integral at all $\alpha\in Z$.

For a fixed $\alpha\in \bar C$, the set $ C(x)_\alpha=\{\,p/q \mid p,q\in C[x],\, q(\alpha)\neq 0\}$ forms a subring of $C(x)$. The set of all elements $f\in A$ that are locally integral at some fixed $\alpha\in \bar C$ forms a $ C(x)_\alpha$-module, denoted by $\OO_\alpha$. A basis of this module is called a {\em local integral basis} at $\alpha$ of $A$. For $Z\subseteq \bar C$, a basis of $A$ is called a {\em local integral basis} at $Z$ if it is a {\em local integral basis} at all $\alpha\in Z$. If $Z$ is a finite subset of $\bar C$, a local integral basis at $Z$ exists and the algorithm~\cite{chen20} for computing such a basis for P-recursive sequences is similar to that for algebraic functions~\cite{vanHoeij94} and D-finite functions~\cite{kauers15b,Aldossari20}.

Let $\alpha\in \bar C$. For two elements $\alpha_1,\alpha_2\in \alpha +\set Z$, we say $\alpha_1 < \alpha_2$ if $\alpha_1-\alpha_2<0$. Now we recall some relevant properties of integral bases in~\cite[Section 5.2]{chen20}. The lemma gives a way of computing $\val_\alpha$ by choosing an appropriate basis of the solution space $\sol_\alpha(L)$.

\if t\completeproof
\blue
The proof of the following two Lemmas have been shown in~\cite[Section 5.2, item~(A)]{chen20}.
\begin{lemma*}\label{Lem: liminf}
	For a fixed $\alpha\in\bar C$, let $\zeta\in \alpha + \set Z$ be such that $\ell_0\ell_r$	has no roots in $\{\eta \in \alpha+\set Z\mid \eta < \zeta\}$. For any solution $b\in \sol_\alpha(L)$, if $\min_{i=1}^r\nu_q(b(\zeta+i-1))=0$, then $\min_{i=1}^r\nu_q(b(\zeta-n+i-1))=0$ for all $n\geq 0$. Thus $\liminf_{n\to\infty}\nu_q(b(\zeta-n))=0$.
\end{lemma*}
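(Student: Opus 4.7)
My plan is an induction on $n$ showing that the quantity $v_n := \min_{i=1}^{r}\nu_q(b(\zeta - n + i - 1))$ equals $0$ for every $n\geq 0$. The base case $n=0$ is exactly the hypothesis. Once the inductive claim is in hand, the liminf statement follows immediately: for every $m\geq 0$ one has $\nu_q(b(\zeta-m))\geq v_m = 0$, while $v_m=0$ forces at least one of the $r$ consecutive values $b(\zeta-m), b(\zeta-m+1), \ldots, b(\zeta-m+r-1)$ to have $\nu_q$ equal to $0$, so the value $0$ is attained for arbitrarily large $m$.

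For the induction step, I would apply the recurrence $L\cdot b=0$ at the point $z=\zeta-n-1$. Since $\zeta-n-1<\zeta$, the hypothesis guarantees that neither $\ell_0$ nor $\ell_r$ vanishes there, and the Taylor expansion of $\ell_j(z+q)$ at $q=0$ therefore satisfies $\nu_q(\ell_0(z+q))=\nu_q(\ell_r(z+q))=0$, while the remaining coefficients only satisfy $\nu_q(\ell_i(z+q))\geq 0$. Solving for the leftmost value yields
\[
b(\zeta-n-1) \;=\; -\frac{1}{\ell_0(\zeta-n-1+q)} \sum_{i=1}^{r} \ell_i(\zeta-n-1+q)\,b(\zeta-n-1+i),
\]
where the arguments $\zeta-n-1+i$ for $i=1,\dots,r$ run through exactly the step $n$ window.

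I would then split into two cases according to where the minimum at step $n$ is attained. If some entry $b(\zeta-n+i-1)$ with $1\leq i\leq r-1$ already has $\nu_q=0$, it survives into the step $n+1$ window $\{b(\zeta-n-1),b(\zeta-n),\ldots,b(\zeta-n+r-2)\}$, so $v_{n+1}\leq 0$; the displayed identity together with $\nu_q(\ell_i)\geq 0$ simultaneously shows $\nu_q(b(\zeta-n-1))\geq v_n=0$, forcing $v_{n+1}=0$. Otherwise the minimum at step $n$ is attained only at $i=r$: then all summands with $i<r$ have strictly positive $\nu_q$, while the $i=r$ summand has $\nu_q$ exactly $0$, because both $\nu_q(\ell_r(\zeta-n-1+q))=0$ and $\nu_q(b(\zeta-n+r-1))=0$; no cancellation at the leading order is possible, so the right-hand side has $\nu_q=0$, and dividing by the unit $\ell_0(\zeta-n-1+q)$ gives $\nu_q(b(\zeta-n-1))=0$ and hence $v_{n+1}=0$ again.

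The main obstacle is the second case, where one must be certain that the dominant $i=r$ term cannot be killed by cancellation with the other summands. This is precisely where the assumption that $\ell_r$ (and not merely $\ell_0$) has no roots below $\zeta$ enters essentially: it forces $\ell_r(\zeta-n-1+q)$ to be a unit in $\bar C((q))$, so that the $i=r$ term indeed contributes a nonzero coefficient at $q^0$ which no higher-$\nu_q$ contribution can cancel.
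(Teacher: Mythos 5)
Your proof is correct and takes essentially the same approach as the paper: an induction on $n$ driven by the recurrence $L\cdot b=0$ evaluated at a point below $\zeta$, where both $\ell_0$ and $\ell_r$ are units of $\bar C((q))$. The paper packages the induction step as the equality of two consecutive window minima, obtained by combining the ultrametric bounds from solving the recurrence for the leftmost and for the rightmost entry, whereas you work one-sidedly and replace the second bound by a case split on where the window minimum is attained together with an explicit non-cancellation argument; both variants are valid and of comparable length.
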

\begin{proof}
	We only need to show $\min_{i=1}^r\nu_q(b(\zeta-n+i-1)) = \min_{i=1}^r\nu_q(\zeta-(n-1)+i-1)$ for all $n> 0$. Then the conclusion is true by induction on $n$. Since $b\in \sol_\alpha(L)$, it satisfies the recurrence
	\[\ell_0(z+q)b(z) + \ell_1(z+q)b(z+1)+\cdots + \ell_r(z+q)b(z+r)=0\]
	for all $z\in\alpha+\set Z$.
	Substituting $z=\zeta-n$ yields that
	\[b(\zeta-n) = - \frac{1}{\ell_0(\zeta-n+q)} \left(\ell_1(\zeta-n+q)b(\zeta-n+1)+\cdots + \ell_r(\zeta-n+q)b(\zeta-n+r)\right).\]
	For $n>0$, since $\zeta - n$ is not a root of $\ell_0(x)$, it follows that $\nu_q(\ell_0(\zeta-n+q)) =0$. Then we have
	\[\nu_q(b(\zeta-n))\geq \min \{\nu_q(b(\zeta-n+1)),\ldots, \nu_q(b(\zeta-n+r))\}.\]
	Similarly, since
	\[b(\zeta-n+r) = -\frac{1}{\ell_r(\zeta-n+q)}\left(\ell_0(\zeta-n+q)b(\zeta-n) + \cdots + \ell_{r-1}(b(\zeta - n +r - 1))\right)\]
	and $\zeta-n$ is not a root of $\ell_r(x)$, we have
	\[\nu_q(b(\zeta-n+r))\geq\min \{\nu_q(b(\zeta-n)),\ldots, \nu_q(b(\zeta-n+r-1))\}.\]
	So $\min \{\nu_q(b(\zeta-n)),\ldots, \nu_q(b(\zeta-n+r-1))\} = \min \{\nu_q(b(\zeta-n+1)),\ldots, \nu_q(b(\zeta-n+r))\}$. This completes the proof.
\end{proof}
\black
\fi
\begin{lemma}\label{Lem: val computation}
	For a fixed $\alpha\in\bar C$, let $\zeta\in \alpha + \set Z$ be such that $\ell_0\ell_r$ 	has no roots in $\{\eta \in \alpha+\set Z\mid \eta < \zeta\}$. Let $b_1,\ldots, b_r$ be a basis of $\sol_\alpha(L)$ defined by the initial values $b_j(\zeta+i-1)=\delta_{i,j}$ for $i,j=1,\ldots, r$. Then for all $f\in A$ and $\eta\in \alpha +\set Z$,
	\[\val_\eta(f) = \min_{j=1}^r\nu_q((f\cdot b_j)(\eta)).\]
\end{lemma}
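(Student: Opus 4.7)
The plan is to reduce the infimum in the definition~\eqref{EQ:val} to the finite minimum over a fixed basis by exploiting two structural observations: first, that $\sol_\eta(L)=\sol_\alpha(L)$ whenever $\eta\in\alpha+\set Z$, because both sides consist of sequences on the same lattice $\alpha+\set Z=\eta+\set Z$; and second, that for the specially chosen basis $b_1,\dots,b_r$, the quantity $\liminf_{n\to\infty}\nu_q(b(\eta-n))$ admits a closed-form in terms of the coordinates of $b$ in this basis.

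The key preliminary step is to establish, for any nonzero $b\in\sol_\alpha(L)$, the identity
\[
  \liminf_{n\to\infty}\nu_q(b(\eta-n))
  \;=\;\min_{i=1}^r\nu_q\bigl(b(\zeta+i-1)\bigr).
\]
The right-hand side is finite (otherwise $b$ would vanish at $r$ consecutive points and hence be zero), so, setting $m$ equal to this minimum, the rescaled sequence $q^{-m}b$ has minimum value $0$ at $\zeta,\zeta+1,\dots,\zeta+r-1$. The hidden liminf-lemma preceding the statement then yields $\liminf_{n\to\infty}\nu_q(q^{-m}b(\zeta-n))=0$, and multiplying back by $q^m$ gives the claim at $\zeta$. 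The tail of the sequence $(\nu_q(b(\eta-n)))_{n\ge 0}$ coincides with that of $(\nu_q(b(\zeta-n)))_{n\ge 0}$ up to a finite index shift, so the liminf at $\eta$ equals the liminf at $\zeta$, as required.

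Now expand an arbitrary nonzero $b\in\sol_\alpha(L)$ as $b=\sum_{j=1}^r c_j b_j$ with $c_j\in\bar C((q))$. The initial-value condition $b_j(\zeta+i-1)=\delta_{i,j}$ gives $b(\zeta+i-1)=c_i$, so by the identity above $\liminf_{n\to\infty}\nu_q(b(\eta-n))=\min_j\nu_q(c_j)$. On the other hand, linearity of the $C(x)[S]$-action yields $(f\cdot b)(\eta)=\sum_j c_j(f\cdot b_j)(\eta)$ and hence the standard valuation estimate
\[
  \nu_q\bigl((f\cdot b)(\eta)\bigr)\;\ge\;\min_j\nu_q(c_j)\;+\;\min_{j=1}^r\nu_q\bigl((f\cdot b_j)(\eta)\bigr).
\]
Subtracting $\min_j\nu_q(c_j)$ from both sides gives the lower bound $\val_\eta(f)\ge\min_{j=1}^r\nu_q((f\cdot b_j)(\eta))$. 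For the matching upper bound, pick $j_0$ achieving the minimum on the right and take $b=b_{j_0}$, so that $(c_j)=(\delta_{j,j_0})$ and $\min_j\nu_q(c_j)=0$; the definition of $\val_\eta$ then yields exactly $\nu_q((f\cdot b_{j_0})(\eta))$.

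The only non-routine ingredient is the liminf-lemma (which is imported from~\cite{chen20}); once it is available, the argument is bookkeeping with valuations. The main subtlety to watch is the justification that the liminf is unchanged when the base point is moved from $\zeta$ to $\eta$, but this reduces to the elementary fact that a shift by finitely many indices does not alter the liminf of a sequence.
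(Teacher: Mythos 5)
Your argument is correct and follows essentially the same route as the paper's own (which defers to the argument in \cite[Section 5.2, item (A)]{chen20}): express an arbitrary $b\in\sol_\alpha(L)$ in the special basis $b_1,\dots,b_r$, invoke the liminf lemma to compute $\liminf_{n\to\infty}\nu_q(b(\zeta-n))$ in terms of the coordinates $c_j$, and then do the valuation bookkeeping. Your version is a bit cleaner in two small ways — you state the closed-form identity $\liminf_{n\to\infty}\nu_q(b(\eta-n))=\min_j\nu_q(c_j)$ once instead of splitting into the normalized and unnormalized cases, and you explicitly note both that the liminf is invariant under replacing $\zeta$ by $\eta$ (a finite index shift) and that the matching upper bound is achieved at $b=b_{j_0}$ — but neither is a genuinely different idea, just tidier presentation of the same proof.
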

\begin{proof}
	See the argument in~\cite[Section 5.2, item (A)]{chen20}.
\end{proof}
\if t\completeproof
\blue
\begin{proof}
	For all $j=1,\ldots, r$, since $\min_{i=1}^{r}\nu_q (b_j(\zeta+i-1))=0$ by construction and $\ell_0\ell_r$ has no roots less than $\zeta$ in $\alpha + \set Z$, we have $\liminf_{n\to+\infty}\nu_q(b_j(\zeta-n))=0$ by Lemma~\ref{Lem: liminf}.
	Let $b = c_{1}b_{1} + \dots + c_{r}b_{r}$ for coefficients
	$c_{1},\dots,c_{r} \in \bar C((q))$.
	Let $v :=\min_{j=1}^{r} \nu_q (c_{j})$.
	Assume that $v=0$. Then for all $f\in A$ and $\eta \in \alpha+ \set Z$,
	$$\nu_q ((f \cdot b)(\eta)) \geq \min_{j=1}^{r} \nu_q ((f \cdot b_{j})(\eta)).$$
	Furthermore, by construction of the $b_{j}$'s,
	$b(\zeta + i - 1) = c_{i}$ for all $i \in \{1,\dots,r\}$, so $\min_{i=1}^{r} \nu_q (b(\zeta+i-1)) = 0$. Again by Lemma~\ref{Lem: liminf}, we have $\liminf_{n\to+\infty}\nu_q(b(\zeta-n))=0$.
	It follows that
	$$\quad\nu_q((f\cdot b)(\eta)) - \liminf\limits_{n\to +\infty} \nu_q (b(\eta-n)) {}\geq \min\limits_{j=1}^{r} \nu_q((f \cdot b_{j})(\eta)).$$
	
	Assume now that $v \neq 0$.
	In that case, consider $q^{-v}b = q^{-v}c_{1}b_{1} + \dots + q^{-v}c_{r}b_{r}$
	with $\min_{j=1}^{r} \nu_q (q^{-v}c_{j}) = 0$.
	From the above,
	\[\nu_q((f\cdot q^{-v}b)(\eta)) - \liminf_{n\to +\infty} \nu_q (q^{-v}b(\eta-n))\geq \min_{j=1}^{r} \nu_q((f \cdot b_{j})(\eta)).\]
	Since for all $\eta \in z + \set Z$ we have
	$\nu_q (q^{-v}b(\eta)) = \nu_q (b(\eta)) -v$
	and	$\nu_q ((f\cdot q^{-v}b)(\eta)) = \nu_q ((q^{-v} f \cdot b)(\eta))=\nu_q ((f\cdot b)(\eta)) -v$, it still holds that $$\nu_q((f\cdot b)(\eta)) - \liminf\limits_{n\to +\infty} \nu_q (b(\eta-n))\geq \min\limits_{j=1}^{r} \nu_q((f \cdot b_{j})(\eta)),$$
	so that indeed
	$\val_\eta(f) = \min_{j=1}^{r} \nu_q((f \cdot b_{j})(\eta))$.
\end{proof}
\black\fi

The following lemma arises from a discussion in~\cite[Section 5.2, item~(E)]{chen20}.
\begin{lemma}\label{Lem: left bound}
	Let $\zeta\in \alpha + \set Z$ with $\alpha\in \bar C$ be such that $\ell_0\ell_r$ has no roots in $Z:=\{\eta \in \alpha+\set Z\mid \eta < \zeta\}$. Then $\{1,S,\ldots, S^{r-1}\}$ is a local integral basis of $A$ at $Z\cup\{\zeta\}$.
\end{lemma}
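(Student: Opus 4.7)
The plan is to apply Lemma~\ref{Lem: val computation} with the stated $\zeta$ and the basis $b_1,\ldots,b_r$ of $\sol_\alpha(L)$ determined by $b_j(\zeta+i-1)=\delta_{i,j}$. Under that lemma, for every $\eta\in\alpha+\set Z$ and every $f\in A$ one has $\val_\eta(f)=\min_{j=1}^r\nu_q((f\cdot b_j)(\eta))$. Specializing to $f=S^k$ gives
\[
\val_\eta(S^k)=\min_{j=1}^r\nu_q(b_j(\eta+k)),\qquad k=0,\dots,r-1,
\]
so the proof splits into two tasks: (i) each $S^k$ is integral at every $\eta\in Z\cup\{\zeta\}$, and (ii) every element of $A$ integral at such an $\eta$ lies in the $C(x)_\eta$-span of $\{1,S,\ldots,S^{r-1}\}$.

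For task (i) I would prove by downward induction that $\nu_q(b_j(\eta'))\geq 0$ for all $j$ and all $\eta'\in\alpha+\set Z$ with $\eta'\leq\zeta+r-1$. The base case $\eta'\in\{\zeta,\ldots,\zeta+r-1\}$ follows from the initial conditions, which make $\nu_q(b_j(\eta'))$ either $0$ or $+\infty$. For $\eta'<\zeta$, solving the defining recurrence at $z=\eta'$ for its leftmost term yields
\[
b_j(\eta')=-\frac{1}{\ell_0(\eta'+q)}\bigl(\ell_1(\eta'+q)b_j(\eta'+1)+\cdots+\ell_r(\eta'+q)b_j(\eta'+r)\bigr);
\]
since $\eta'\in Z$ and $\ell_0$ has no root in $Z$, $\ell_0(\eta'+q)$ is a unit in $\bar C[[q]]$, every $\ell_i(\eta'+q)$ lies in $\bar C[[q]]$, and the induction hypothesis completes the step. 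Any $\eta\in Z\cup\{\zeta\}$ and $k\in\{0,\ldots,r-1\}$ satisfy $\eta+k\leq\zeta+r-1$, so $\val_\eta(S^k)\geq 0$ as desired.

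For task (ii) I would introduce the $r\times r$ matrix $M(\eta)=\bigl(b_j(\eta+i)\bigr)_{0\le i\le r-1,\,1\le j\le r}$. Writing any $f\in A$ as $f=f_0+f_1S+\cdots+f_{r-1}S^{r-1}$ with $f_i\in C(x)$, the identity $(f\cdot b_j)(\eta)=\sum_i f_i(\eta+q)\,b_j(\eta+i)$ becomes
\[
\bigl((f\cdot b_1)(\eta),\ldots,(f\cdot b_r)(\eta)\bigr)=\bigl(f_0(\eta+q),\ldots,f_{r-1}(\eta+q)\bigr)\,M(\eta).
\]
By the choice of the $b_j$, $M(\zeta)$ is the identity, and the passage from $M(\eta+1)$ to $M(\eta)$ amounts to left multiplication by a companion-style matrix whose first row consists of the entries $-\ell_i(\eta+q)/\ell_0(\eta+q)$ and whose determinant equals $\pm\ell_r(\eta+q)/\ell_0(\eta+q)$. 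Since $\ell_0\ell_r$ has no root in $Z$, each such transition matrix and its inverse have entries in $\bar C[[q]]$, and iterating from $\zeta$ down to $\eta$ produces both $M(\eta)$ and $M(\eta)^{-1}$ with entries in $\bar C[[q]]$. Hence $\val_\eta(f)\geq 0$ forces each $f_i(\eta+q)$ to have non-negative $q$-valuation, i.e.\ $f_i\in C(x)_\eta$. Combined with (i), this yields $\OO_\eta=C(x)_\eta\cdot 1+\cdots+C(x)_\eta\cdot S^{r-1}$ for every $\eta\in Z\cup\{\zeta\}$.

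The main technical obstacle is task (ii): the integrality assertion in (i) only uses that $\ell_0$ has no roots in $Z$, but controlling $M(\eta)^{-1}$ and thereby ruling out elements more integral than $\{1,S,\ldots,S^{r-1}\}$ crucially requires $\ell_r$ to be a $q$-unit at every point of $Z$ as well. This is precisely the role of the full hypothesis that $\ell_0\ell_r$ has no roots in $Z$, and it is what makes the companion transition matrices genuine units of $\mathrm{GL}_r(\bar C[[q]])$.
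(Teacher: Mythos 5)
Your proof is correct, but it takes a slightly different route from the paper's. The paper proves the claim by re-choosing the solution basis for each $\eta \in Z\cup\{\zeta\}$: it takes $b_1,\ldots,b_r$ normalized so that $b_j(\eta+i-1)=\delta_{i,j}$, which makes your matrix $M(\eta)$ literally the identity, and then Lemma~\ref{Lem: val computation} (applied with $\eta$ in place of $\zeta$ — legitimate since the root-free hypothesis still holds to the left of $\eta$) immediately gives $\val_\eta(f)=\min_{j}\nu_q(f_{j-1}(\eta+q))$. This one-line reduction handles both of your tasks (i) and (ii) simultaneously and even yields the exact value function, not merely the integrality inclusion. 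Your argument instead keeps a single $\zeta$-adapted basis fixed, invokes Lemma~\ref{Lem: val computation} once, and then tracks the Casoratian-type transition matrices $T(\eta')$ from $\zeta$ down to $\eta$, observing that each is a unit of $\mathrm{GL}_r(\bar C[[q]])$ because $\ell_0$ and $\ell_r$ are both $q$-units at points of $Z$. Both arguments are sound; yours makes the role of $\ell_r$ in controlling $M(\eta)^{-1}$ explicit (a genuine insight — this is precisely where the hypothesis on $\ell_r$ enters), at the cost of re-deriving by hand the invertibility that the paper sidesteps by re-normalizing the basis per $\eta$.
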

\begin{proof}
	Let $\eta$ be an arbitrary but fixed point in $Z\cup\{\zeta\}$ and $b_1,\ldots, b_r$ be a basis of $\sol_\alpha(L)$ such that $b_j(\eta+i-1)=\delta_{i,j}$ for all $i,j=1,\ldots, r$. Since $\ell_0\ell_r$ has no roots less than $\eta$ in $\alpha+\set Z$, by Lemma~\ref{Lem: val computation} we obtain that for all $f\in A$,
	\begin{equation}\label{EQ: val_eta}
		\val_\eta(f) =\min_{j=1}^r\nu_q((f\cdot b_j)(\eta)).
	\end{equation}
	So $\val_\eta(S^{i-1}) = \min_{j=1}^r\nu_q((S^{i-1}\cdot b_j)(\eta))=\min_{j=1}^r\nu_q(b_j(\eta+i-1))=0$ and hence $S^{i-1}$ is integral at $\eta$ for all $i=1,\ldots, r$.
	
	Let $f=f_0 + f_1 S+ \cdots +f_{r-1}S^{r-1}\in A$ with $f_0, \ldots, f_{r-1}\in C(x)$. By the construction of the $b_j$'s, we have $(f\cdot b_j)(\eta) = \sum_{i=1}^rf_{i-1}(\eta+q)b_j(\eta+i-1) = f_{j-1}(\eta+q)$ for all $j=1,\ldots, r$. It follows that
	\begin{equation}\label{EQ: val_eta 2}
		\min_{j=1}^r\nu_q((f\cdot b_j)(\eta)) =\min_{j=1}^{r}\nu_q(f_{j-1}(\eta+q)).
	\end{equation}
    Assume $f$ is integral at $\eta$. Then $\val_\eta(f)\geq 0$ by definition. By \eqref{EQ: val_eta} and~\eqref{EQ: val_eta 2}, we have $\nu_q(f_{j-1}(\eta+q))\geq 0$ for all $j=1,\ldots, r$, which implies that $f_{j-1}(x)$ has no pole at $\eta$. So $f_{j-1}\in C(x)_\eta$ for all $j=1,\ldots, r$. This proves that $\{1,S,\ldots, S^{r-1}\}$ is a local integral basis at $\eta$.
\end{proof}
Local integral bases and shift operators are related as follows.
\begin{lemma}\label{Lem: integral after shift}
  Let $\a\in \bar C$. Then
  \begin{enumerate}[{(i)}]
   \item\label{it: integral} $f\in A$ is locally integral at $\a$ if and only if $Sf$ is locally integral at $\a-1$;
   \item\label{it: integra basis} $\{\omega_1,\ldots,\omega_r\}$ is a local integral basis of $A$ at $\a$ if and only if $\{S\omega_1, \ldots, S\omega_r\}$ is a local integral basis of $A$ at $\a-1$.
  \end{enumerate}
\end{lemma}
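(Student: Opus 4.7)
The plan is to reduce both assertions to the single identity
\[\val_{\alpha-1}(Sf)=\val_\alpha(f)\quad\text{for all }f\in A,\]
from which (i) follows at once and (ii) follows with a little algebraic bookkeeping. The key observation is that $\alpha+\set Z=(\alpha-1)+\set Z$, and the recurrence defining the solution space is the same on both sides, so $\sol_\alpha(L)=\sol_{\alpha-1}(L)$ as subspaces of $\bar C((q))^{\alpha+\set Z}$. For any representative $P_f\in C(x)[S]$ of $f$ and any $b$ in this common space, the commutation rule $Sp=\sigma(p)S$ combined with the action~\eqref{EQ: Lf} yields $(Sf\cdot b)(z)=(f\cdot b)(z+1)$; specialising at $z=\alpha-1$ gives $(Sf\cdot b)(\alpha-1)=(f\cdot b)(\alpha)$. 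The $\liminf$ term is invariant under the reindexing $n\mapsto n+1$. Plugging both equalities into~\eqref{EQ:val} produces the identity, which is (i).

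For (ii) I would use two easy algebraic facts. First, $\sigma$ restricts to a ring isomorphism $C(x)_\alpha\to C(x)_{\alpha-1}$, since a denominator $q\in C[x]$ satisfies $q(\alpha)\neq 0$ iff $q(x+1)$ does not vanish at $\alpha-1$. Second, left multiplication by $S$ on $A$ is injective; this can be verified by expanding $Sf$ in the basis $1,S,\ldots,S^{r-1}$ with the help of $S^r\equiv -\ell_r^{-1}(\ell_0+\ell_1S+\cdots+\ell_{r-1}S^{r-1})\pmod{\<L>}$, which forces all coefficients of $f$ to vanish in turn. Combined with the rule $aS=S\sigma^{-1}(a)$, these facts imply that $\{S\omega_1,\ldots,S\omega_r\}$ is a $C(x)$-basis of $A$ whenever $\{\omega_1,\ldots,\omega_r\}$ is.

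For the forward direction of (ii), by (i) each $S\omega_i$ lies in $\OO_{\alpha-1}$, so it suffices to show they span $\OO_{\alpha-1}$ over $C(x)_{\alpha-1}$. Given $g\in\OO_{\alpha-1}$, expand $g=\sum_i a_iS\omega_i=S\bigl(\sum_i\sigma^{-1}(a_i)\omega_i\bigr)$ with $a_i\in C(x)$, and set $h:=\sum_i\sigma^{-1}(a_i)\omega_i$, so that $Sh=g\in\OO_{\alpha-1}$. Part (i) then gives $h\in\OO_\alpha$; the integral-basis hypothesis at $\alpha$ together with the uniqueness of the $C(x)$-representation in $\{\omega_i\}$ forces $\sigma^{-1}(a_i)\in C(x)_\alpha$, whence $a_i\in C(x)_{\alpha-1}$. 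The converse direction is entirely symmetric: given $f\in\OO_\alpha$, write $Sf\in\OO_{\alpha-1}$ in the basis $\{S\omega_i\}$ with coefficients in $C(x)_{\alpha-1}$, pull $S$ out in front via $aS=S\sigma^{-1}(a)$, and cancel it using its injectivity.

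The main obstacle is simply the bookkeeping about where $\sigma$ and $\sigma^{-1}$ send the local rings $C(x)_\alpha$ and $C(x)_{\alpha-1}$; the injectivity of $S$ on $A$, though essential, is a one-line Ore-algebra computation.
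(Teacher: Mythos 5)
Your proof is correct and follows essentially the same approach as the paper: part (i) is established by tracking how the shift $S$ moves the evaluation point from $\alpha$ to $\alpha-1$ (the paper phrases this as $\nu_q(((Sf)\cdot u)(\alpha-1)) = \nu_q((f\cdot u)(\alpha))$, while you also make explicit the identities $\sol_\alpha(L)=\sol_{\alpha-1}(L)$ and the invariance of the $\liminf$ term under reindexing, thereby obtaining the slightly stronger fact $\val_{\alpha-1}(Sf)=\val_\alpha(f)$). For part (ii), your bookkeeping with the ring isomorphism $\sigma\colon C(x)_\alpha\to C(x)_{\alpha-1}$, the bijectivity of left multiplication by $S$ on $A$, and the pull-through identity $aS=S\sigma^{-1}(a)$ is exactly the content of the paper's two-direction argument.
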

\begin{proof}
   \eqref{it: integral}\, Clear by $\nu_q(((Sf)\cdot u)(\a-1)) = \nu_q((f\cdot u)(\a))$ for all $f\in C(x)[S]$ and $u \in \bar C((q))^{\a+\set Z}$.

   \eqref{it: integra basis}\, ``$\Leftarrow$": Suppose $f=\sum_{i=1}^r f_i(x) \omega_i$ with $f_i(x)\in C(x)$ is integral at $\a$. By the item~\eqref{it: integral}, we get
   \[Sf = S\sum_{i=1}^rf_i(x)\omega_i = \sum_{i=1}^r f_i(x+1) S\omega_i\]
   is integral at $\a-1$. Since $\{S\omega_1, \ldots, S\omega_r\}$ is a local integral basis at $\a-1$, it follows that $f_i(x+1)\in C(x)_{\a-1}$. So $f_i(x) \in C(x)_\a$ and hence $\{\omega_1,\ldots, \omega_r\}$ is a local integral basis at~$\a$.

   ``$\Rightarrow$": Suppose $g = \sum_{i=1}^rg_i(x)S\omega_i$ with $g_i(x)\in C(x)$ is integral at $\a-1$. Since $\ell_0\ell_r\neq 0$, both $\{1,S, \ldots, S^{r-1}\}$ and $\{S, S^2, \ldots, S^r\}$ are $C(x)$-vector space bases of $A$. Then the map sending $f$ to $Sf$ is a bijection from $A$ to itself, because $S\sum_{i=1}^rf_i(x)S^{i-1} = \sum_{i=1}^rf_i(x+1)S^i$ for all $f_i(x)\in C(x)$. There exists $f\in A$ such that $g=Sf$. On the other hand, we have $g=S\sum_{i=1}^r g_i(x-1)\omega_i$. So $f=\sum_{i=1}^rg_i(x-1)\omega_i$. By the item~\eqref{it: integral} and the assumption that $g=Sf$ is integral at $\a-1$, we obtain that $f$ is integral at~$\a$. Since $\{\omega_1,\ldots,\omega_r\}$ is a local integral basis at $\a$, it follows that $g_i(x-1)\in C(x)_\a$. Thus $g_i(x) \in C(x)_{\a-1}$. This completes the proof.
\end{proof}

Abramov's dispersion function was introduced in rational summation~\cite{Abramov1971,Paule1995b}. It has a feature that the dispersion of the denominator of a rational function strictly increases by one after taking the difference. Now we introduce a local version of dispersion functions and investigate their connection with local integral bases.

\begin{defi}
	Let $\a\in\bar C$ and $p\in C[x]$. The \emph{dispersion} of $p$ in $\a+\set Z$ is defined as
	\begin{align*}
		\disp_\a(p) =\max\{k\in \set N \mid \exists\, \xi \in \a+\set Z \text{ such that } p(\xi) = p(\xi+k) = 0\}.
	\end{align*}
	By convention, $\max \emptyset = -\infty$.
\end{defi}

Let $W=(\omega_1,\ldots,\omega_r)$ be a vector space basis of $A$ over $C(x)$ and $f=\frac{1}{u}\sum_{i=1}^ra_i\omega_i\in A$ with $a_1,\ldots, a_r, u\in C[x]$. We write $f= \frac{aW}{u}$, where $a=(a_1,\ldots, a_r)\in C[x]^r$. The vector $SW$ is defined as $(S\omega_1,\ldots,S\omega_r)$. Throughout this subsection, we assume $\gcd(a_1, \ldots, a_r, u)=1$. Then $u$ is a denominator of $f$ with respect to $W$.

\begin{prop}\label{Prop: disp after diff}
	Let $f= \frac{aW}{u}$ and $g=\frac{bW}{v}$, where $a,b\in C[x]^r$, $u,v\in C[x]$. Let $\alpha\in \bar C$. Suppose $f=\Delta g$ and $\disp_\a(v) \geq 0$.
	\begin{enumerate}[{(i)}]
		\item\label{it: disp1} If $W$ is a local integral basis of $A$ at $\a +\set Z$, then
		$$\disp_\a(u)\geq \disp_\a(v) + 1.$$
		\item\label{it: disp2} Let $e\in C[x]$ and $M \in C[x]^{r\times r}$ be such that $SW=\frac{1}{e}MW$. If $W$ is a local integral basis of $A$ at $Z:=\{\eta\in \a +\set Z\mid \eta \leq \beta\}$ with $\beta\in \a +\set Z$ and $\beta$ is the only possible root of $e$ in $\a + \set Z$, then
		$$\disp_\a((x-\beta)u)\geq 1.$$
		This means the polynomial $u$ has a root in $\a + \set Z$ that is distinct from $\beta$.
	\end{enumerate}
\end{prop}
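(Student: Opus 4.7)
The overall approach adapts the classical Abramov dispersion argument from rational summation: starting from $f = \Delta g = Sg - g$, one propagates the leftmost and rightmost roots of the denominator $v$ of $g$ into corresponding roots of the denominator $u$ of $f$. Two ingredients drive the argument. First, under the coprimality hypothesis and whenever $W$ is a local integral basis at $\eta\in\a+\set Z$, an element $h = cW/w$ in coprime form is integral at $\eta$ iff $w(\eta)\neq 0$: the local integral basis property reduces integrality to the coefficient-wise condition $c_i/w\in C(x)_\eta$, and coprimality ensures that every root of $w$ survives in some coefficient. Second, Lemma~\ref{Lem: integral after shift}(i) interchanges non-integrality of $g$ at $\eta$ with non-integrality of $Sg$ at $\eta-1$.

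For part~(i), let $\xi_{\min}$ and $\xi_{\max}$ be the smallest and largest roots of $v$ in $\a+\set Z$, so that $\xi_{\max} - \xi_{\min} = \disp_\a(v)$. At the right end, $g$ fails to be integral at $\xi_{\max}$, whereas $Sg$ is integral there because $v(\xi_{\max}+1)\neq 0$ makes $g$ integral at $\xi_{\max}+1$. Consequently $f = Sg - g$ is not integral at $\xi_{\max}$, which forces $u(\xi_{\max}) = 0$. By the mirror argument at the left end, $g$ is integral at $\xi_{\min}-1$ while $Sg$ is not, yielding $u(\xi_{\min}-1) = 0$. Hence $\disp_\a(u) \geq \xi_{\max} - (\xi_{\min}-1) = \disp_\a(v) + 1$.

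For part~(ii) I would case-split on whether $v$ has a root in $Z$. If $v$ has a root $\leq\beta$, let $\xi$ be the smallest root of $v$ in $\a+\set Z$; then $\xi,\xi-1\in Z$, the argument from part~(i) applied at the left end yields $u(\xi-1) = 0$, and $\xi-1<\beta$ supplies the required root distinct from $\beta$. The main obstacle is the remaining case, in which every root of $v$ in $\a+\set Z$ lies strictly above $\beta$ and the local integral basis hypothesis no longer reaches the relevant points. To handle it I would bypass the basis framework and use the matrix description of the shift: combining $g = bW/v$ with $SW = \frac{1}{e}MW$ produces
\[
f \;=\; Sg - g \;=\; \frac{v\,\sigma(b)\,M \;-\; e\,\sigma(v)\,b}{e\,v\,\sigma(v)}\,W.
\]
Let $\xi$ be the largest root of $v$ in $\a+\set Z$, so $\xi>\beta$. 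Then $e(\xi)\neq 0$ since $\beta$ is the only root of $e$ in $\a+\set Z$, and $\sigma(v)(\xi) = v(\xi+1)\neq 0$ since $\xi$ is maximal; thus $(x-\xi)$ divides the denominator $ev\sigma(v)$ to exactly the multiplicity of $\xi$ in $v$. Evaluating the numerator row vector at $\xi$ collapses to $-e(\xi)v(\xi+1)\,b(\xi)$, whose $i$-th component is nonzero whenever $b_i(\xi)\neq 0$, and the coprimality $\gcd(b_1,\ldots,b_r,v)=1$ guarantees at least one such $i$. Therefore $(x-\xi)$ is not cancelled and survives in $u$, so $u(\xi) = 0$ with $\xi\neq\beta$. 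In either subcase $u$ has a root in $\a+\set Z$ distinct from $\beta$, which is exactly $\disp_\a((x-\beta)u)\geq 1$.
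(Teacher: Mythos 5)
Your proof is correct and follows essentially the same route as the paper: part~(i) propagates the minimal and maximal roots of $v$ via the integral-basis characterization of denominators and Lemma~\ref{Lem: integral after shift}, and part~(ii) splits into the same two cases (a root of $v$ at or below $\beta$ reuses the left-end argument from~(i); otherwise one substitutes $SW = \frac{1}{e}MW$ and tracks the maximal root of $v$ through the combined fraction, using that $e$ and $\sigma(v)$ do not vanish there and coprimality of $b$ with $v$). The only cosmetic difference is that you make your two subcases in~(ii) disjoint, while the paper allows them to overlap; both cover $\disp_\alpha(v)\geq 0$.
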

\begin{proof}
	\eqref{it: disp1} \, A direct calculation yields
	\begin{equation}\label{EQ: f=Sg-g}
		f= \Delta g = Sg - g= \frac{\sigma(b)}{\sigma(v)}SW - \frac{b}{v}W.
	\end{equation}
	
	Since $\disp_\a(v)\geq 0$, the polynomial $v$ has a root in $\a+\set Z$. Let $\gamma$ be the minimal root of $v$ in $\a +\set Z$.
	Since $x-\gamma \mid v$, we have $\sigma(x-\gamma) = x-(\gamma-1) \mid \sigma(v)$. So $\gamma -1$ is a root of~$\sigma(v)$.
	Since $W$ is a local integral basis at $\a + \set Z$, by Lemma~\ref{Lem: integral after shift} we know that $SW$ is also a local integral basis at $\a+\set Z$. So both $W$ and $SW$ are local integral bases at $\gamma-1$. Note that $\sigma(v)$ has a root at $\gamma-1$, but $v$ does not by the minimality of $\gamma$. So $Sg=\frac{\sigma(b)}{\sigma(v)}SW$ is not integral at $\gamma-1$, but $g=\frac{b}{v}W$ is integral at $\gamma-1$. Thus $f= Sg - g$ is not integral at $\gamma-1$. This implies that its denominator $u$ has a root at $\gamma-1$.
	
	Similarly, let $\beta$ be the maximal root of $v$ in $\a +\set Z$. Then $v$ has a root at $\beta$, but $\sigma(v)$ does not (otherwise $\beta +1$ would be a root of $v$). Since both $W$ and $SW$ are local integral bases at $\beta$, it follows that $g$ is not integral at $\beta$ but $Sg$ is integral at $\beta$. So $f=Sg-g$ is not integral at $\beta$ and hence $u$ has a root at $\beta$.
	
	By the definition of dispersion, we have
	\[\disp_\a(u)\geq \beta - (\gamma-1) =\beta - \gamma +1 = \disp_\a(v) + 1.\]
	
	\eqref{it: disp2} \, If $v$ has a root in $\a + \set Z$ that is less than or equal to $\beta$, let $\gamma$ be the minimal root of $v$ in $\a + \set Z$. Then $\gamma \leq \beta$. Since $W$ is a local integral basis at $Z$, by Lemma~\ref{Lem: integral after shift} we know that $SW$ is a local integral basis at $Z'=\{\eta-1\mid \eta\in Z\}$. So both $W$ and $SW$ are local integral bases at $\gamma-1$. By the same argument as in~\eqref{it: disp1}, we obtain that $u$ has a root at~$\gamma-1$, which is distinct from $\beta$.
	
	If $v$ has a root in $\a + \set Z$ that is greater than $\beta$, let $\gamma$ be the maximal root of $v$ in $\a + \set Z$. Then $\gamma > \beta$ and $\gamma$ is not a root of $\sigma(v)$. Substituting $SW= \frac{1}{e}MW$ into~\eqref{EQ: f=Sg-g} yields that
	\[f= \Delta g=\left(\frac{\sigma(b)}{\sigma(v)}\frac{1}{e}M - \frac{b}{v}\right)W = \frac{a}{u}W.\]
	Note that $\gamma$ is not a root of $\sigma(v)e$ but $\gamma$ is a root of $v$. So $u$ has a root at $\gamma$, which is distinct from $\beta$.
\end{proof}

\subsection{Integral elements at infinity}\label{sec:ibinf}
In the differential case, a local integral basis at infinity is used to reduce D-finite functions to integrands with higher valuations at infinity~\cite{chen23}. To achieve the same goal in the shift case, we formulate the notion of local integral bases at infinity for P-recursive sequences, using the framework of valued vector spaces, as in~\cite{chen20}.

The operator $L=\ell_0 + \ell_1 S +\ldots + \ell_rS^r$ with $\ell_0\ell_r\neq0$ admits $r$ linearly independent solutions of the form
\begin{equation}\label{EQ:generalized_sol}
	T=x^{-s}\,\Gamma(x)^{u/v}\phi^x\exp(p(x^{1/v}))a(x^{-1/v},\log(x)),
\end{equation}
where $v\in \set N\setminus\{0\}$, $u\in \set Z$, $s\in \bar C$, $\phi\in\bar C\setminus\{0\}$, $a\in \bar C[[x]][y]$ and $p\in \bar C[x]$ with $\deg_x(p)<v$. Such objects are called {\em generalized series solutions} of $L$ at infinity, see~\cite[Section 2.4]{Kauers23}.
Let $R$ be the set of all $\bar C$-linear combinations of series in the form~\eqref{EQ:generalized_sol}. Then $R$ equipped with the natural addition and multiplication forms a ring. Let $\sol_\infty(L)$ be the set of all series solutions of $L$ in $R$. It is called the {\em solution space} of $L$ at infinity. Then $\sol_\infty(L)$ is a $\bar C$-vector space of dimension $r$.

We shall use series solutions to define a value function on $A$ at infinity. The valuation of a series at infinity will be defined in terms of the exponent $s$ in \eqref{EQ:generalized_sol}. Similar to the D-finite case~\cite{kauers15b}, we use a function $\iota$ on $\bar C$ to choose an order on these exponents.

\begin{defi}\label{Def: iota}
	Let $\iota: \bar C\to \set R$ be a function such that  for all $s,s_1,s_2\in \bar C$ and $m\in\set Q$,
	\begin{enumerate}[(i)]
		\item $\iota(s_1+s_2) \geq \iota(s_1)+\iota(s_2)$;
		\item $\iota(m+s) = \iota(m) + \iota(s)$;
		 \item $\iota(m) = m$.
	\end{enumerate}
	The {\em valuation} $\nu_\infty(t)$ of a term $t:=x^{-s}\,\Gamma(x)^{u/v}\phi^x\exp(p(x^{1/v}))\log(x)^\ell$ is defined as~$\iota(s)$. The {\em valuation} $\nu_\infty(b)$ of a nonzero series $b\in R$ at infinity is the minimum of the valuations of all the terms appearing in $b$ (with nonzero coefficients). Set $\nu_\infty(0)=\infty$. A series $b\in R$ is called {\em integral at infinity} if $\nu_\infty(b)\geq0$.
\end{defi}

\begin{example}
	For $C\subseteq \overline{\set C(t)}$ with $t$ being a new indeterminate, let $[t^0]s\in \set C$ be the constant term of $s\in \bar C$ when $s$ is expanded as a Puiseux series around a fixed $\alpha \in \set C\cup\{\infty\}$. One choice of $\iota: \bar C\to \set R$ is
	$\iota(s) = \Re([t^0]s)$ for all $s\in \bar C$, where $\Re(\cdot)$ is the real part of a complex number. Unless otherwise stated, we shall always assume this choice of $\iota$ with $\alpha = \infty$ in the examples given. With this convention, $1, x^{-\sqrt{-1}}, x^{-\sqrt t}, x^{-\sqrt t -1}$ are integral at infinity, but $x^{-\sqrt{-1} +1}, x^{-\sqrt t +1}$ are not.
\end{example}

\begin{lemma}\label{Lem: valuation_series}
	The function $\nu_\infty$ on $R$ satisfies the following properties: for all $a,b\in R$,
	\begin{enumerate}[(i)]
		\item\label{it:valuation1} $\nu_\infty(a)=\infty$ if and only if $a=0$;
		\item\label{it:valuation2}  $\nu_\infty(ab)\geq \nu_\infty(a)+\nu_\infty(b)$;
		\item\label{it:valuation3}   $\nu_\infty(a+b)\geq \min \{\nu_\infty(a), \nu_\infty(b)\}$.
	\end{enumerate}
	So the set $\{b\in R \mid b \text{ is integral}\,\}$ forms a subring of $R$.
\end{lemma}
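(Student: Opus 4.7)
The plan is to reduce all three properties to statements about individual \emph{terms} of the form $t=x^{-s}\Gamma(x)^{u/v}\phi^x\exp(p(x^{1/v}))\log(x)^{\ell}$ and then pass to $R$ by linearity, viewing every element $a\in R$ as a $\bar C$-linear combination of such terms, each having valuation $\iota(s)$. Under this view, property~\eqref{it:valuation1} is immediate from the convention $\nu_\infty(0)=\infty$: any nonzero $a$ contains at least one term with nonzero coefficient, whose valuation $\iota(s)$ lies in $\set R$. For~\eqref{it:valuation3} I expand $a$ and $b$ as sums of terms; the terms appearing with nonzero coefficient in $a+b$ form a subset (after possible cancellation) of the terms appearing in $a$ or $b$, so the minimum of their valuations is at least $\min\{\nu_\infty(a),\nu_\infty(b)\}$.

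For~\eqref{it:valuation2} I would first verify the bound on a product of two single terms $t_1,t_2$ with parameters $(s_i,u_i,v_i,\phi_i,p_i,\ell_i)$. Setting $v=\lclm(v_1,v_2)$ rewrites each $p_i(x^{1/v_i})$ as a polynomial in $x^{1/v}$ of $x$-degree $<1$, and the product collects into a single term
\[
t_1t_2 = x^{-(s_1+s_2)}\Gamma(x)^{u_1/v_1+u_2/v_2}(\phi_1\phi_2)^x\exp(\tilde p(x^{1/v}))\log(x)^{\ell_1+\ell_2},
\]
of valuation $\iota(s_1+s_2)\ge \iota(s_1)+\iota(s_2)=\nu_\infty(t_1)+\nu_\infty(t_2)$ by property~(i) of Definition~\ref{Def: iota}. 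For general $a=\sum_i c_it_i$ and $b=\sum_j d_ju_j$, distributing and applying~\eqref{it:valuation3} together with this pointwise bound yields
\[
\nu_\infty(ab)\ge \min_{i,j}\nu_\infty(t_iu_j)\ge \min_{i,j}\bigl(\nu_\infty(t_i)+\nu_\infty(u_j)\bigr)=\nu_\infty(a)+\nu_\infty(b).
\]
The subring claim then follows formally: $0$ and $1$ are integral (with valuations $\infty$ and $0$), and~\eqref{it:valuation2}--\eqref{it:valuation3} give closure under the ring operations.

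The step I expect to require the most care is the ``multiset of terms'' viewpoint underlying both~\eqref{it:valuation3} and the distributive step in~\eqref{it:valuation2}: because distinct parameter tuples $(s,u/v,\phi,p,\ell)$ can a priori denote the same series (for instance after replacing $v$ by a common multiple), one has to pin down a normal form for terms and check that addition, scalar multiplication and multiplication respect it, so that ``the terms appearing in $a$'' is unambiguous and ``having nonzero coefficient'' is well-defined. Once this bookkeeping is in place, the three bounds reduce to property~(i) of $\iota$ together with the identity $\iota(s+k/v)=\iota(s)+k/v$ coming from properties~(ii) and~(iii) of $\iota$, which also ensures that the minimum in the definition of $\nu_\infty$ is actually attained on each basic series of the shape~\eqref{EQ:generalized_sol}, namely at the leading monomial of $a(x^{-1/v},\log x)$.
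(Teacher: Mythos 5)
Your proposal is correct and follows essentially the same route as the paper: the paper also reduces (ii) to the single observation that $x^{-s_1}x^{-s_2}=x^{-(s_1+s_2)}$ together with superadditivity of $\iota$, and declares (i) and (iii) ``clear by definition.'' Your extra care about a normal form for terms and about why the minimum in $\nu_\infty$ is attained (via $\iota(s+n/v)=\iota(s)+n/v$) are reasonable tightenings of points the paper leaves implicit, but they do not change the argument.
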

\begin{proof}
	\eqref{it:valuation1} and \eqref{it:valuation3} are clear by definition. For \eqref{it:valuation2}, note that $x^{-s_1}x^{-s_2} = x^{-(s_1 +s_2)}$ and $\iota(s_1 + s_2)\geq\iota(s_1) + \iota(s_2)$ for all $s_1,s_2\in \bar C$ by the assumption on $\iota$ in Definition~\ref{Def: iota}.
\end{proof}
The shift $\sigma: b(x)\mapsto b(x+1)$ of series is defined through the rules:
\begin{align*}
	(x+1)^{-s} &= \sum_{n=0}^{\infty}\binom{-s}{n}x^{-s-n} = x^{-s} - sx^{-s-1} + \cdots,\\
	\Gamma(x+1)^{u/v} &= x^{u/v}\Gamma(x)^{u/v},\\
	\phi^{x+1} &= \phi \phi^x,\\
	\exp(c(x+1)^{k/v}) &= \exp(cx^{k/v})\exp(c((x+1)^{k/v}-x^{k/v}))\\
	 &=\exp(cx^{k/v})\sum_{n=0}^\infty \frac{1}{n!}\left(c\sum_{i=1}^\infty \binom{k/v}{i}x^{k/v-i}\right)^n\\
	 &=\exp(cx^{k/v})\left(1+\frac{ck}{v}x^{k/v-1}+\cdots\right)\, \text{($c\in \bar C$, $0\leq k<v$)},\\
	 \log(x+1)^\ell &= \left(\log(x)+\log\left(1+\frac{1}{x}\right)\right)^\ell = \left(\log(x) - \sum_{n=1}^\infty \frac{(-1)^n}{n}x^{-n}\right)^\ell\\
	 &=\log(x)^\ell + \ell x^{-1}\log(x)^{\ell-1} + \cdots,
\end{align*}
where ``$\cdots$'' denotes some terms of higher valuations at infinity.

An operator $P = p_0 + p_1 S + \cdots +p_{r-1} S^{r-1}$ in $C(x)[S]$ acts on a series $b$ via
\[P\cdot b =p_0 b + p_1 \sigma(b) + \cdots + p_{r-1}\sigma^{r-1}(b).\]
For a solution $b\in \sol_\infty(L)\subseteq R$ and $f\in A=C(x)[S]/\<L>$, the action $f\cdot b$ is defined as $P_f\cdot b$ if $f=P_f+\<L>$ for some $P_f\in C(x)[S]$. Let $b_1, \ldots, b_r$ be a basis of $\sol_\infty(L)$ in the form of~\eqref{EQ:generalized_sol}. The {\em value function} $\val_\infty\colon A\to \set R \cup\{\infty\}$ is defined as
\begin{equation*}
	\val_\infty(f) :=\min_{i=1}^r\nu_\infty(f\cdot b_i).
\end{equation*}
Then by Lemma~\ref{Lem: valuation_series}, $\val_\infty(f)$ is the minimum valuation of all series $f\cdot b$ at infinity, where $b$ runs through series solutions in $\sol_\infty(L)$. 
An element $f\in A$ is called {\em (locally) integral} at infinity if $\val_\infty(f)\geq 0$.

\begin{prop}\label{Prop: value function}
	The function $\val_\infty$ satisfies the following properties: for all $f, g\in A$ and $a\in C(x)$,
	\begin{enumerate}[(i)]
		\item\label{it:val1} $\val_\infty(f)= \infty$ if and only if $f=0$;
		\item\label{it:val2} $\val_\infty(af)=\nu_\infty(a)+\val_\infty(f)$;
		\item\label{it:val3} $\val_\infty(f+g)\geq \min\{\val_\infty(f), \val_\infty(g)\}$;
	\end{enumerate}
    where $\nu_\infty(a)$ is the valuation of its Laurent series expansion at infinity.
\end{prop}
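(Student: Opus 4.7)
The plan is to reduce all three claims to the corresponding properties of $\nu_\infty$ on the ring $R$ established in Lemma~\ref{Lem: valuation_series}, using the $C(x)$-linearity of the action $(af+g)\cdot b = a(f\cdot b)+g\cdot b$ valid for $a\in C(x)$, $f,g\in A$, and any $b\in R$, together with the fact that $b_1,\dots,b_r$ forms a $\bar C$-basis of $\sol_\infty(L)$.

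For item~\eqref{it:val1}, the direction $f=0\Rightarrow\val_\infty(f)=\infty$ is immediate. Conversely, if $\val_\infty(f)=\infty$, then Lemma~\ref{Lem: valuation_series}\eqref{it:valuation1} forces $f\cdot b_i=0$ for each $i$; representing $f$ by its canonical form $P_f\in C(x)[S]$ with $\deg_S P_f<r$, this says that $P_f$ annihilates an $r$-dimensional space of generalized series, which is impossible unless $P_f=0$, so $f=0$ in $A$. For item~\eqref{it:val3}, expanding $(f+g)\cdot b_i = f\cdot b_i+g\cdot b_i$ and combining Lemma~\ref{Lem: valuation_series}\eqref{it:valuation3} with the elementary identity $\min_i\min\{x_i,y_i\}=\min\{\min_i x_i,\min_i y_i\}$ yields
\[\val_\infty(f+g)=\min_i\nu_\infty(f\cdot b_i+g\cdot b_i)\geq\min_i\min\{\nu_\infty(f\cdot b_i),\nu_\infty(g\cdot b_i)\}=\min\{\val_\infty(f),\val_\infty(g)\}.\]

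Item~\eqref{it:val2} requires genuine work, since Lemma~\ref{Lem: valuation_series}\eqref{it:valuation2} only yields an inequality. Starting from $(af)\cdot b_i=a(f\cdot b_i)$, the desired equality will follow as soon as one has the sharper identity $\nu_\infty(ac)=\nu_\infty(a)+\nu_\infty(c)$ for every $a\in C(x)$ and every $c\in R$. To prove it, I would expand $a$ at infinity as a Laurent series with a unique nonzero leading monomial $c_a x^{-m}$, where $m=\nu_\infty(a)\in\set Z$, and isolate in $c$ the finite subsum $c^\ast$ of generalized terms attaining the minimum $\nu_\infty(c)$. By properties~(ii) and~(iii) of Definition~\ref{Def: iota}, multiplication by $x^{-k}$ for $k\in\set Z$ shifts the valuation of any generalized term by exactly $k$; hence every product of a term of $a$ with a term of $c$ other than those in $c_a x^{-m}\cdot c^\ast$ contributes a valuation strictly greater than $m+\nu_\infty(c)$, while $c_a x^{-m}\cdot c^\ast$ is a nonzero $C$-multiple of a sum of pairwise distinct generalized terms of valuation exactly $m+\nu_\infty(c)$ and so cannot cancel. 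The main obstacle is precisely this no-cancellation argument: one must verify that the map $t\mapsto x^{-m}t$ preserves distinctness of the generalized terms composing $c^\ast$, for which property~(ii) of Definition~\ref{Def: iota} is indispensable. Once $\nu_\infty(ac)=\nu_\infty(a)+\nu_\infty(c)$ is in hand, taking the minimum over $i$ delivers $\val_\infty(af)=\nu_\infty(a)+\val_\infty(f)$.
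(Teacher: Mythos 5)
Your items~(ii) and~(iii) follow essentially the paper's route: item~(iii) is immediate from Lemma~\ref{Lem: valuation_series}\eqref{it:valuation3}, and for item~(ii) the paper also isolates the leading term $a_m x^{-m}$ of $a$ and uses $\iota(m+s)=m+\iota(s)$ to conclude; you spell out the no-cancellation step more explicitly, which the paper treats as implicit, but the core is the same.

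The genuine gap is in item~(i). You assert that if $P_f\in C(x)[S]$ has $\deg_S P_f<r$ and annihilates an $r$-dimensional space of generalized series, then $P_f=0$ --- but this is precisely the claim that needs proof, and it is not automatic when the trailing coefficient $f_0$ of $P_f$ vanishes. When $f_0\neq 0$, the bound ``solution dimension $\leq$ order'' is standard and cited; but when $f_0=0$ one writes $P_f=S^j\tilde g$ with $\tilde g$ having nonzero trailing coefficient, and one must still argue that $S^j\tilde g\cdot b_i=0$ forces $\tilde g\cdot b_i=0$. This is not obvious from the definition of $\sigma$ on $R$, since $\sigma$ is introduced only through a list of formal expansion rules and its injectivity on $R$ is not granted a priori. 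The paper closes this hole with a dominant-term computation: if $g\cdot b_i\neq 0$ with dominant term $T$, then $S\cdot T=\phi x^{u/v}T+\cdots$ with $\phi\neq 0$, so $Sg\cdot b_i$ has a nonzero dominant term and hence cannot vanish. Without this (or an equivalent argument that $\sigma$ is injective on $R$), your ``impossible unless $P_f=0$'' is an unproved assertion covering the hardest case.
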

\begin{proof}
	\eqref{it:val1} Let $b_1, \ldots, b_r$ be a basis of $\sol_\infty(L)$ in the form of~\eqref{EQ:generalized_sol}. If $f=0$, then $f\cdot b_i$ is the zero series for all $i=1,\ldots, r$. So $\val_\infty(f) = \infty$. Conversely, assume that $\val_\infty(f) = \infty$, then by definition $\nu_\infty(f\cdot b_i) = \infty$ for all $i=1,\ldots, r$. So $f\cdot b_i =0$ for $i=1,\ldots,r$, which implies that $f$ has at least $r$ linearly independent series solutions. On the other hand, every nonzero element of $A$ can be written as $f= f_0 + f_1S+\cdots +f_kS^k$ with $f_i\in C[x]$, $f_k\neq 0$ and $k<r$. If the trailing coefficient $f_0$ of $f$ is nonzero, the dimension of the solution space of $f$ in $R$ can not exceed its order $k$, see~\cite[Section 2.4]{Kauers23}. So in this case, $f$ cannot be nonzero.  
	
	If $f_0 = 0$, we write $f = Sg$ with $g\in A$ such that the order of $g$ is $k-1$. Then $Sg\cdot b_i = 0$ and we claim that $g\cdot b_i=0$ for all $i=1,\ldots,r$. Otherwise, suppose $g\cdot b_i \neq 0$ for some $i\in\{1,\ldots,r\}$. Then $g\cdot b_i$ is a generalized series involves a term
	\[T=x^{-s}\,\Gamma(x)^{u/v}\phi^x\exp(p(x^{1/v}))\log(x)^\ell,\]
	where $v\in \set N\setminus\{0\}$, $u \in \set Z$, $\ell\in \set N$, $s\in \bar C$, $\phi\in\bar C\setminus\{0\}$ and $p\in \bar C[x]$ with $\deg_x(p)<v$. Let $T$ be the dominant term of $g\cdot b_i$,
	i.e., among all terms with minimal $s$ the one with the largest exponent~$\ell$. Then 
    \[S\cdot T= (x+1)^{-s}\,\Gamma(x+1)^{u/v}\phi^{x+1}\exp(p((x+1)^{1/v}))\log(x+1)^\ell=\phi x^{u/v}T+\cdots.\]
    Here $\phi\neq 0$ by the assumption $\ell_0\ell_r\neq 0$. So $\phi x^{u/v}T$ is the dominant term of $Sg\cdot b_i$ and hence $Sg\cdot b_i\neq 0$. This leads to a contradiction. Thus $g\cdot b_i =0$ for all $i=1,\ldots, r$. Let $j$ with $1\leq j\leq k$ be the minimal integer such that $f_j\neq 0$. We can write $f=S^j\tilde g$ with $\tilde g\in A$ such that the order of $\tilde g$ is $k-j$ and the trailing coefficient of $\tilde g$ is nonzero. We proceed similarly with $g=S^{j-1}\tilde g$, and finally obtain that $\tilde g\cdot b_i =0$ for all $i=1,\ldots,r$. So $\tilde g$ cannot be nonzero by the same argument as in the first case. Thus $f$ cannot be nonzero.
	
	\eqref{it:val2}\, Every nonzero $a\in C(x)\subseteq C[[x^{-1}]]$ can be expanded as a Laurent series in the form $a=a_mx^{-m} + a_{m-1}x^{-m-1}+\cdots$, where $a_i \in C$, $m\in \set Z$ and $a_m\neq 0$. By Definition~\ref{Def: iota}, $\nu_\infty(a) = \iota(m)= m$ is an integer. Note that for any $m\in \set Z$ and $s\in \bar C$, $x^{-m}x^{-s} = x^{-m-s}$ and $\iota(m+s) = \iota(m) +\iota(s)=m+\iota(s)$. So $\nu_{\infty}(ab)=\nu_\infty(a) + \nu_\infty(b)$ for all $a\in C(x)$ and $b\in R$. Thus~\eqref{it:val2} is true.
	
	\eqref{it:val3} Clear by the item~\eqref{it:valuation3} in Lemma~\ref{Lem: valuation_series}.
\end{proof}
For a rational function $p/q$ with $p,q\in C[x]$, we have $\nu_\infty(p/q) = \deg_x(q)-\deg_x(p)$. The set $ C(x)_\infty=\{\,p/q \mid p,q\in C[x],\, \deg_x(p)\leq \deg_x(q)\}$ forms a subring of $C(x)$. By Proposition~\ref{Prop: value function}, $(A,\val_\infty)$ is a valued vector space over the valued field $(C(x), \nu_\infty)$, see~\cite[Definition 2]{chen20}. So by Proposition 4 in~\cite{chen20}, the set of all elements $f\in A$ that are locally integral at infinity forms a $ C(x)_\infty$-module, denoted by $\OO_\infty$. A basis of this module is called a {\em local integral basis} at infinity of $A$. Similar to the algebraic and D-finite cases~\cite{vanHoeij94,Aldossari20,chen20,kauers15b}, such a basis can be computed by Algorithm 10 in~\cite{chen20}. The termination of this algorithm is guaranteed by the existence of discriminant functions. For a basis $\omega_1,\ldots, \omega_r$ of $A$, the discriminant function $\disc$ is defined as
\[\disc({\omega_1,\ldots, \omega_r })=\lfloor \nu_\infty(\det((\omega_i\cdot b_j)_{i,j=1}^r))\rfloor,\]
where $b_1,\ldots, b_r$ is a basis of $\sol_\infty(L)$ in the form of~\eqref{EQ:generalized_sol}.
It can be checked that the $\disc$ is indeed a discriminant function in~\cite[Definition 12]{chen20}.

A $C(x)$-vector space basis $\{\omega_1,\dots,\omega_r\}$ of $A=C(x)[S]/\<L>$ is
called \emph{normal at infinity} if there exist $a_1,\dots,a_r\in
C(x)$ such that $\{a_1\omega_1,\dots,a_r\omega_r\}$ is a local integral basis at infinity. Given a vector space basis, Trager's algorithm~\cite{Trager84} can be literally adapted to compute a basis that is normal at infinity and generates the same $C[x]$-module as the given basis.

For a nonconstant polynomial $g\in C[x]$, we have $\deg_x(\Delta(g)) = \deg_x(g)-1$. For a rational function $g\in C(x)$, if $\nu_\infty(g)\neq 0$, then $\nu_\infty(\Delta(g)) = \nu_\infty(g) + 1$. So the valuation at infinity of such a rational function increases by exactly one under each difference operation. In the P-recursive case, we shall prove that the valuation at infinity increases by at most one.

The product rule in the shift case says that $\Delta(uv)=\Delta(u) v + \sigma(u)\Delta(v)$ for every $u,v\in R$.
For several elements $u_1,\dots,u_n\in R$, it generalizes to
\begin{align}\label{EQ: diff n terms}
  \Delta(u_1\cdots u_n) & = \sum_{i=1}^n \sigma(u_1\cdots u_{i-1})\Delta(u_i)u_{i+1} \cdots u_n.
\end{align}

\begin{prop}\label{Prop:val_under_diff}
	Let $g\in A$. If $\val_\infty(g)\neq 0$, then $\val_\infty(\Delta g) \leq \val_\infty(g)+1$.
\end{prop}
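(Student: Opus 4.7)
The plan is to reduce to a single-series statement and then to analyse the leading term. From $P\cdot b=\sum_i p_i\sigma^i b$ one checks $(Sg)\cdot b=\sigma(g\cdot b)$ for any $b\in\sol_\infty(L)$, and hence $(\Delta g)\cdot b=\Delta(g\cdot b)$. Fix a basis $b_1,\dots,b_r$ of $\sol_\infty(L)$ in the form~\eqref{EQ:generalized_sol} and set $v:=\val_\infty(g)$. If $g=0$ the inequality is trivial; otherwise we may pick $i_0$ with $\nu_\infty(g\cdot b_{i_0})=v$ and write $h:=g\cdot b_{i_0}\in R$. Everything then reduces to the following single-series claim: for every nonzero $h\in R$ with $\nu_\infty(h)=v\ne 0$, we have $\nu_\infty(\Delta h)\le v+1$.

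To prove this claim, I would choose a term $T=x^{-s}\Gamma(x)^{u/w}\phi^{x}\exp(p(x^{1/w}))\log(x)^{\ell}$ appearing in $h$ of minimum valuation $v=\iota(s)$ and, among those, of maximum log-exponent~$\ell$. Using the shift rules stated just above the proposition, one factors
\[
\sigma T \;=\; T\cdot\phi\,x^{u/w}\,(1+A),
\]
where $A$ is a series in $x^{-1/w}$ and $\log x$ of strictly positive valuation, collecting the corrections from $(1+1/x)^{-s}$, from $\exp(p((x+1)^{1/w})-p(x^{1/w}))$, and from $(\log(x+1)/\log x)^{\ell}$. Hence $\Delta T=T\bigl[(\phi x^{u/w}-1)+\phi x^{u/w}A\bigr]$, and a short case analysis on the sign of $u$, the value of $\phi$, and (in the exceptional subcase $u=0,\phi=1$) on $p$ and $\ell$, yields $\nu_\infty(\Delta T)\le v+1$: the leading part is $\phi x^{u/w}T$ of valuation $v-u/w<v$ when $u>0$, it is $(\phi x^{u/w}-1)T$ of valuation $v$ when $u<0$ or ($u=0$ and $\phi\ne 1$), and it is $T\cdot A$ of valuation $v+(1-\deg p/w)$ or $v+1$ in the remaining subcases. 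The hypothesis $v\ne 0$ enters precisely in the subcase $u=0,\phi=1,p=0,\ell=0$: together with property~(iii) of~$\iota$ it forces $s\ne 0$, so the leading monomial $-s\,x^{-s-1}$ of $\Delta T$ is nonzero.

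The remaining and, I expect, main obstacle is to verify that the leading monomial of $\Delta T$ is not cancelled in $\Delta h$ by contributions from $\Delta T'$ for the other terms $T'$ of $h$. The key observation is that $\sigma$ preserves the \emph{outer tuple} $(u,w,\phi,p)$ of every term, since $\Gamma(x+1)^{u/w}=x^{u/w}\Gamma(x)^{u/w}$, $\phi^{x+1}=\phi\,\phi^x$, and $\exp(p((x+1)^{1/w}))=\exp(p(x^{1/w}))\cdot(1+\text{series of positive valuation})$ all reproduce the outer factors. Consequently every monomial of $\Delta T'$ carries the outer factors of $T'$, and cancellations in $\Delta h$ can occur only between terms sharing the same outer tuple. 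Within one outer tuple, a direct inspection of the remaining $(s',\ell')$-parameters---using property~(ii) of~$\iota$ together with the maximality of $\ell$ in the choice of $T$---rules out any cancellation of the leading monomial of $\Delta T$. Combined with the single-term bound, this gives $\nu_\infty(\Delta h)\le\nu_\infty(\Delta T)\le v+1$.
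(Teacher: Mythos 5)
Your proposal matches the paper's proof: both fix a generalized series solution $b_i$ attaining the minimum in $\val_\infty(g)$, take the dominant term $T$ of $g\cdot b_i$ (minimal $\iota(s)$, then maximal $\ell$), apply the shift rules to compute $\Delta T$, and run the same five-way case analysis on $u$, $\phi$, $p$, $\ell$, with the hypothesis $\val_\infty(g)\neq 0$ entering only in the last subcase to force $s\neq 0$. You go slightly further than the paper by spelling out the no-cancellation step (invariance of the outer factor $\Gamma(x)^{u/w}\phi^x\exp(p(x^{1/w}))$ under $\sigma$, plus comparison of the $(s',\ell')$-data within one outer tuple), which the paper leaves implicit when it simply declares $T_0$ to be the dominant term of $(\Delta g)\cdot b_i$.
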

\begin{proof}
	Let $b_i$ be a generalized series solution in $\sol_\infty(L)$ such that $\val_\infty(g)=\nu_\infty(g\cdot b_i)$.
    There $g\cdot b_i$ involves a term
	\[T=x^{-s}\,\Gamma(x)^{u/v}\phi^x\exp(p(x^{1/v}))\log(x)^\ell,\]
	where $v\in \set N\setminus\{0\}$, $u \in \set Z$, $\ell\in \set N$, $s, \phi\in \bar C\setminus\{0\}$ and $p\in \bar C[x]$ with $\deg_x(p)<v$. For this fixed series $b_i$, let $T$ be the dominant term of $g\cdot b_i$,
	i.e., among all terms with minimal $r$ the one with the largest exponent~$\ell$. Let $k = \deg_x(p)$ and $c=\lc_x(p)$ be the degree and the leading coefficient of $p$ in $x$ respectively (take $k=0$ if $p(x)=0$).
	
	Now we use Equation~\eqref{EQ: diff n terms} to calculate $\Delta \cdot T$. Note that
	\begin{align*}
	     \Delta(x^{-s})\,\Gamma(x)^{u/v}\phi^x\exp(p(x^{1/v}))\log(x)^\ell  & = -sx^{-1}T+\cdots,\\\nonumber
	     \sigma(x^{-s})\Delta(\Gamma(x)^{u/v})\phi^x\exp(p(x^{1/v}))\log(x)^\ell&=(x^{u/v}-1)T+\cdots,\\
	     \sigma(x^{-s}\,\Gamma(x)^{u/v})\Delta(\phi^x)\exp(p(x^{1/v}))\log(x)^\ell&=(\phi-1)x^{u/v}T+\cdots,\\
	     \sigma(x^{-s}\,\Gamma(x)^{u/v}\phi^x)\Delta(\exp(p(x^{1/v})))\log(x)^\ell&=\frac{\phi ck}{v}x^{u/v+k/v-1}T+\cdots,\\
	     \sigma(x^{-s}\,\Gamma(x)^{u/v}\phi^x\exp(p(x^{1/v})))\Delta(\log(x)^\ell)&=\phi\ell x^{u/v-1}\log(x)^{-1}T+\cdots.
	\end{align*}
So by Equation~\eqref{EQ: diff n terms}, we get
\[\Delta\cdot T = (-sx^{-1} -1 + \phi x^{u/v} +\frac{\phi ck}{v}x^{u/v+k/v-1}) T + \cdots. \]
	Note that $0\leq \tfrac{k}{v}<1$ because $0\leq k < v$. So $-\frac{u}{v} < -\tfrac{u}{v}-\tfrac{k}{v}+1$. Let $T_0$ be the dominant term of
	$(\Delta g)\cdot b_i$. We make the following case distinction.
	\begin{enumerate}[(1)]
		\item If $u>0$, then $-u/v < \min\left\{1, 0, -\frac{u}{v}-\frac{k}{v}+1\right\}$. We have $T_0=\phi x^{u/v}T$.
		Here $\phi\neq 0$ by the assumption $\ell_0\ell_r\neq 0$.
		\item If $u<0$, then $0<\min\left\{1, - \frac{u}{v}, -\frac{u}{v}-\frac{k}{v} + 1\right\}$. We have $T_0=-T$.
		\item If $u=0$ and $\phi \neq 1$, then $0<\min\left\{1,-\frac{k}{v}+1\right\}$. We have $T_0=(\phi-1)T$.
		\item If $u=0$, $\phi = 1$ and $0<k<v$, then $-\tfrac{k}{v} +1< 1$. We have $T_0=\frac{ck}{v} x^{k/v-1}T$.
		\item If $u=0$, $\phi = 1$ and $k=0$, then $T_0=-sx^{-1}T$. Here $s\neq 0$ by the
		assumption $\val_\infty(g)\neq 0$.
	\end{enumerate}
    The above calculation reveals that the valuation of the term $\Delta\cdot T$ in $(\Delta g)\cdot y_i$ is less than or equal to $\nu_\infty(T)+1$ by Definition~\ref{Def: iota}, which implies that $\val_\infty(\Delta g)\leq \val_\infty(g)+1$.
\end{proof}

\section{Suitable bases}


Let $W=(\omega_1,\dots,\omega_r)$ be a $C(x)$-vector space basis of $A=C(x)[S]/\<L>$. Throughout this subsection, let $e_w\in C[x]$ and $M_w=(m_{i,j})_{i,j=1}^r\in C[x]^{r\times r}$ be
such that
\[
SW= \frac{1}{e_w}M_wW,
\]
and $\gcd(e,m_{1,1},m_{1,2},\ldots,m_{r,r})=1$.
For algebraic function fields, the name ``suitable basis'' was introduced by Bronstein~\cite{bronstein98a} in the context of lazy Hermite reduction. It refers to a basis $W$ whose elements are (globally) integral and the denominator $e_w$ with respect to derivation is squarefree. In this section, we seek a basis $W$ such that $W$ is a local integral basis at almost all orbits $\alpha+\set Z\in \bar C/\set Z$ and the denominator $e_w$ (with respect to shift operation) is shift-free. Recall that a polynomial $p\in C[x]$ is said to be {\em shift-free} if $\gcd(p,\sigma^i(p))=1$ for all nonzero $i\in \set Z$. So for a shift-free polynomial $p\in C[x]$, there is no $\alpha\in \bar C$ such that $\disp_\alpha(p)>0$.

In the differential case, every (global) integral basis is suitable. So Bronstein transforms an arbitrary basis to a suitable basis, along the way of finding an integral basis. In the shift case, since there are infinitely many singularities, even ``globally" integral elements may not exist. Instead of finding a local integral basis at all $\alpha\in \bar C$, we seek for a finite subset $Z\in \bar C$ such that a local integral basis at $Z$ is ``suitable". The formal definition of suitable bases will be stated later.

We now show that the polynomial $e$ does not depend on the choice of the basis of $A$ but only on the $C[x]$-submodule it generates. Let $W$ and $U$ be two $C(x)$-vector space bases of $A$. Let $e_w, e_u\in C[x]$ and $M_w, M_u\in C[x]^{r\times r}$ be such that $SW=\frac{1}{e_w}M_wW$ and $SU=\frac{1}{e_u}M_uU$. Suppose that $W$ and $U$ generate the same submodule of $A$ over $C[x]$. Then there exists a matrix $T\in C[x]^{r\times r}$ such that $W=TU$ and $T$ is an invertible matrix over $ C[x]$. Shifting both sides of the equation, we get
\[SW=\sigma(T)SU=\left(\sigma(T)\frac{1}{e_u}M_u T^{-1}\right)W=\frac{1}{e_w}M_wW.\]
Since $\sigma(T),T^{-1}\in  C[x]^{r\times r}$,  we have $e_w$ divides $e_u$. Similarly the fact that $U=RW$ with $R=T^{-1}\in C[x]^{r\times r}$ implies that $e_u$ divides $e_w$. Thus $e_w=e_u$ when $e_w,e_u$ are monic.

For a basis $U=\{1, S, \ldots, S^{r-1}\}$ of $A$, the denominator $e_u$ may not be shift-free. However, we can transform $U$ to a local integral basis $W$ at a finite subset $Z$ of $\bar C$ such that $e_w$ is shift-free.

\begin{example}
	Let $T= \frac{x^2x!}{x+1}$, which is annihilated by $L=x^2(x+2)S-(x+1)^4\in \set C(x)[S]$. Then for $U=\{1\}$, we have $SU = \frac{(x+1)^4}{x^2(x+2)}U$. So $e_u=x^2(x+2)$ is not shift-free, because it has two roots $-2$ and $0$ in the orbit $\set Z$. Starting from $U$, we can compute a basis $W=\{(x+1)x^{-3}\}$, which is a local integral basis at $Z=\{-1,0\}$. The algorithm of computing an integral basis at $Z$ guarantees that $W$ and $U$ generate the same $\set C(x)_\a$-module for all $\a\in \set C\setminus Z$. Then we see that $SW=xW$ and hence $e_w=1$ is shift-free. In fact, if we choose an integer $\beta$ with $\beta \geq 0$ and consider $Z=\{-1,0,\ldots ,\beta\}$, then $W=\{(x+1)x^{-3}\prod_{i=1}^\beta (x-i)^{-1}\}$ is a local integral basis at $Z$ and $SW=(x-\beta)W$ with $e_w=1$ being shift-free.
\end{example}

In general, the above idea of finding a basis $W$ with $e_w$ being shift-free works as follows. Let $L=\ell_0 +\ell_1S+ \cdots + \ell_rS^r \in C[x][S]$ with $\ell_0\ell_r \neq 0$. Let $\a_1+\set Z, \ldots,\a_I +\set Z$ be distinct orbits such that for each $i$ with $1\leq i\leq I$, the product $\ell_0\ell_r$ has at least one root in $\a_i +\set Z$. Recall that for $\beta, \gamma\in \a+\set Z$ with $\a\in \bar C$, we say $\beta<\gamma$ if $\beta-\gamma<0$. For each orbit $\a_i + \set Z$, let $\a_{i,1}<\a_{i,2}<\ldots <\a_{i,J_i}$ be all roots of $\ell_0\ell_r$ in $\a_i+\set Z$. We choose an arbitrary $\beta_i\in \a_i+\set Z$ such that $\alpha_{i, J_i}\leq \beta_i$. A basis $W$ of $A=C(x)[S]/\<L>$ is called {\em suitable} at $\{\beta_1,\ldots, \beta_I\}$ if $W$ is a local integral basis of $A$ at $Z=\bigcup_{i=1}^I\{\gamma\in \a_i+\set Z \mid \a_{i,1}<\gamma\leq\beta_i\}$ and it generates the same $C(x)_\a$-module as $U=\{1,S,\ldots, S^{r-1}\}$ for all $\a\in \bar C \setminus Z$. So every suitable basis is a local integral basis at a finite subset $Z$ and the set $Z\cap (\alpha_i + \set Z)$ has an upper bound $\beta_i$ in $\alpha_i + \set Z$ for each $i$ with $1\leq i\leq I$.

Throughout the paper, we assume that if $\alpha_{i,J_i}$ and $\alpha_{i,J_{i'}}$ are conjugate over $C$ for some $1\leq i\neq i'\leq I$, then $\beta_i-\alpha_{i,J_i} = \beta_{i'} - \alpha_{i,J_{i'}}$, which implies that $\beta_i$ and $\beta_{i'}$ are also conjugate over $C$. Under this assumption, we show that all conjugates of $\beta_i$ over $C$ belong to $\{\beta_1,\ldots, \beta_I\}$. Then a suitable basis of at $\{\beta_1,\ldots, \beta_I\}$ exists and can be computed (without algebraic extension of the base field $C$) using the algorithm in~\cite{chen20} for computing an integral basis at a finite subset $Z\subseteq\bar C$. Since every irreducible polynomial in $C[x]$ is shift-free, all conjugates of $\alpha_{i,J_i}$ over $C$ belong to distinct orbits in $\{\alpha_1+\set Z, \ldots, \alpha_I+\set Z\}\subseteq\bar C/\set Z$. Note that if $p(x)\in C[x]$ is the minimal polynomial of $\alpha\in \bar C$, then $\sigma^{-k}(p(x))$ is the minimal polynomial of $\alpha+k$ for any integer~$k$. So the number of conjugates of $\beta_i$ is the same as that of $\alpha_{i,J_i}$. Thus all conjugates of $\beta_i$ over $C$ belong to $\{\beta_1,\ldots, \beta_I\}$.

The following theorem says that a suitable basis has the property that $e_w$ is shift-free.

\begin{thm}\label{Thm: suitable basis}
	We use the above notations. Let $W$ be a suitable basis of $A$ at $\{\beta_1,\ldots,\beta_I\}$.
	\begin{enumerate}[{(i)}]
		\item\label{it:e_w} Let $e_w\in C[x]$ and $M_w\in C[x]^{r\times r}$ be such that $SW=\frac{1}{e_w}M_wW$. Then $e_w$ is shift-free and $\beta_1, \ldots, \beta_I$ are all possible roots of $e_w$ in $\bar C$.
		\item\label{it:f_w} Let $f_w\in C[x]$ and $N_w\in C[x]^{r\times r}$ be such that $W=\frac{1}{f_w}N_wSW$. Then $f_w$ is shift-free and $\beta_1, \ldots, \beta_I$ are all possible roots of $f_w$ in $\bar C$.
	\end{enumerate}
	In particular, $M_w$ and $N_w$ are invertible matrices over $C(x)$.
\end{thm}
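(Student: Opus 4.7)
The plan is to prove, for each of the two assertions, that every root of the common denominator (the polynomial $e_w$ in~(i), $f_w$ in~(ii)) lies in $\{\beta_1,\dots,\beta_I\}$. Shift-freeness then follows automatically: the $\beta_i$ belong to the $I$ pairwise distinct orbits $\alpha_1+\set Z,\dots,\alpha_I+\set Z$, so no two of them differ by a nonzero integer, whence $\disp_\alpha(e_w)\leq 0$ and $\disp_\alpha(f_w)\leq 0$ for every $\alpha\in\bar C$. Invertibility of $M_w$ and $N_w$ over $C(x)$ is then immediate: since $\ell_0\ell_r\neq 0$, the map $f\mapsto Sf$ is a $C(x)$-linear bijection on $A$, so $SW$ is another $C(x)$-basis of $A$ and $M_w/e_w$, $N_w/f_w$ are mutually inverse change-of-basis matrices.

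For part~(i), fix $\alpha\in\bar C\setminus\{\beta_1,\dots,\beta_I\}$. The first case is when $\alpha$ lies in some orbit $\alpha_i+\set Z$ with $\alpha_{i,1}\leq\alpha\leq\beta_i-1$, so that both $\alpha$ and $\alpha+1$ belong to $Z\cup\{\alpha_{1,1},\dots,\alpha_{I,1}\}$. At each such point $W$ is a local integral basis: at points of $Z$ by hypothesis, and at $\alpha_{i,1}\notin Z$ because Lemma~\ref{Lem: left bound} makes $U=\{1,S,\dots,S^{r-1}\}$ a local integral basis at $\alpha_{i,1}$ and, by the definition of a suitable basis, $W$ generates the same $C(x)_{\alpha_{i,1}}$-module as $U$, hence equals $\OO_{\alpha_{i,1}}$. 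Then Lemma~\ref{Lem: integral after shift}\eqref{it: integral} gives that $S\omega_j$ is integral at $\alpha$, and expanding in the basis $W$ at $\alpha$ yields $S\omega_j=\sum_i c_{i,j}\omega_i$ with $c_{i,j}\in C(x)_\alpha$; hence $e_w(\alpha)\neq 0$. In the remaining case, $\alpha$ and $\alpha+1$ both lie outside $Z$ and $\ell_r(\alpha)\neq 0$: this covers orbits without singularities and points strictly above $\beta_i$ or strictly below $\alpha_{i,1}$ in a singular orbit. Let $T\in C(x)^{r\times r}$ be the unique matrix with $W=TU$; the suitable-basis assumption gives $T,T^{-1}\in C(x)_\alpha^{r\times r}$ (since $\alpha\notin Z$) and $\sigma(T),\sigma(T)^{-1}\in C(x)_\alpha^{r\times r}$ (since $\alpha+1\notin Z$). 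Substituting into $SW=\sigma(T)SU=\sigma(T)(M_u/\ell_r)T^{-1}W$ yields $M_w/e_w=\sigma(T)M_uT^{-1}/\ell_r$, which has no pole at $\alpha$, so $e_w(\alpha)\neq 0$.

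Part~(ii) uses the same division. In the first case, $W$ is a local integral basis at $\alpha+1$, so by Lemma~\ref{Lem: integral after shift}\eqref{it: integra basis} the basis $SW$ is locally integral at $\alpha$; expanding $\omega_j\in\OO_\alpha$ in the basis $SW$ yields coefficients in $C(x)_\alpha$, hence $f_w(\alpha)\neq 0$. In the second case, solving $\ell_0+\ell_1S+\cdots+\ell_rS^r\equiv 0$ for $1\equiv-\sum_{k\geq 1}(\ell_k/\ell_0)S^k$ produces a relation $U=(N_u/f_u)SU$ with $N_u\in C[x]^{r\times r}$ and $f_u$ dividing $\ell_0$. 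Combining with $W=TU$ and $SU=\sigma(T)^{-1}SW$ gives $N_w/f_w=TN_u\sigma(T)^{-1}/f_u$, which has no pole at $\alpha$ because $\ell_0(\alpha)\neq 0$ in this case.

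The main technical obstacle is the bookkeeping required to verify that the two cases genuinely cover all of $\bar C\setminus\{\beta_1,\dots,\beta_I\}$; in particular, the step showing that $W$ remains a local integral basis at the leftmost singularity $\alpha_{i,1}$ of each orbit (which is \emph{not} in $Z$) is the crucial glue that makes the ``integral-basis'' and the ``matrix-conjugation'' arguments match without leaving any point uncovered.
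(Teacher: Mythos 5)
Your proof is correct and follows essentially the same route as the paper's: the same change-of-basis matrix $T$ with $W=TU$, the same use of Lemma~\ref{Lem: left bound} to get integrality of $U$ (and hence of $W$) at the leftmost singularity $\alpha_{i,1}$, and the same use of Lemma~\ref{Lem: integral after shift} to transfer integrality through $S$. The only difference is organizational: the paper bounds the root set of $e_w$ globally via the inclusion chain \eqref{EQ:roots of e_w} and then removes $Z'$ by the integral-basis argument, while you fix a point $\alpha\notin\{\beta_1,\dots,\beta_I\}$ and split into two cases (inside versus outside the window $[\alpha_{i,1},\beta_i-1]$); this is the same content, re-sliced pointwise.
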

\begin{proof}
	\eqref{it:e_w}~A direct calculation yields that
	\[SU=\begin{pmatrix}
		0      & 1 &    &  \\
		\vdots &   & \ddots &  \\
		0      &   &        &1\\
		-\frac{\ell_0}{\ell_r} & -\frac{\ell_1}{\ell_r}&\cdots & -\frac{\ell_{r-1}}{\ell_r}
	\end{pmatrix} U =: \frac{1}{e_u}M_uU,
	\]
	where $e_u \in C[x]$ and $M_u\in C[x]^{r\times r}$. So every root of $e_u$ must be a root of the leading coefficient $\ell_r$. Let $T\in C(x)^{r\times r}$ be such that $W=TU$. Taking the shift operation, we get
	\begin{equation}\label{EQ:SW-W}
		SW= \sigma(T)SU=\sigma(T) \frac{1}{e_u}M_uT^{-1}W=\frac{1}{e_w}M_wW.
	\end{equation}
	Since $W$ and $U$ generate the same $C(x)_\alpha$-module for all $\a \in \bar C\setminus Z$, both $T$ and $T^{-1}$ belong to $C(x)_\a^{r\times r}$. So $T$ and $T^{-1}$ have no poles at $\bar C\setminus Z$, i.e., all entries of $T$ and $T^{-1}$ have no poles at $\alpha\in \bar C\setminus Z$. Then $\sigma(T)$ has no poles at $\bar C\setminus Z'$, where $Z'=\{\gamma -1 \mid \gamma \in Z\}=\bigcup_{i=1}^I\{\gamma\in \a_i+\set Z \mid \a_{i,1}\leq \gamma \leq \beta_i-1\}$. By~\eqref{EQ:SW-W} we have
	\begin{align}\label{EQ:roots of e_w}
		\{\text{roots of $e_w$} \}&\subseteq \{\text{poles of $\sigma(T)$}\}\cup\{\text{roots of $e_u$}\}\cup\{\text{poles of $T^{-1}$}\}\nonumber\\
		&\subseteq Z' \cup\{\text{roots of $\ell_r$}\}\cup Z\nonumber\\
		&\subseteq \bigcup_{i=1}^I\left( \left\{\a_{i,1},\ldots, \beta_i-1\right\}\cup\{\a_{i,1},\a_{i,2},\ldots, \a_{i,J_i}\}\cup\{\a_{i,1}+1,\ldots, \beta_i\}\right)\nonumber\\
		&\subseteq\bigcup_{i=1}^I\,\{\a_{i,1},\a_{i,1}+1,\ldots, \beta_i\};
	\end{align}
	here we use the assumption that $\a_{i,J_i}\leq\beta_i$.
	
	On the other hand, 	by Lemma~\ref{Lem: left bound}, we know that for each $1\leq i \leq I$, $U$ is a local integral bases at every $\a \leq \a_{i,1}$ with $\a \in \a_i + \set Z$. So by construction, $W$ is a local integral basis at every $\alpha\leq \beta_i$ with $\a \in \a_i + \set Z$. By Lemma~\ref{Lem: integral after shift}, $SW$ is a local integral basis at every $\alpha\leq \beta_i-1$. Thus $e_w$ has no roots in $Z'$ because $SW$ and $W$ generate the same $C(x)_\alpha$-module for all $\alpha\in Z'$. Combining this fact and the relation in~\eqref{EQ:roots of e_w}, we get
	\[\{\text{roots of $e_w$} \}\subseteq\{\beta_i \mid 1\leq i\leq I\}.\]
	Since $\beta_i\in\a_i + \set Z$ and $\a_i+\set Z\neq \a_j + \set Z$ for all $1\leq i\neq j\leq I$, it follows that $e_w$ is shift-free.
	
	Since both $W$ and $SW$ are $C(x)$-vector space bases of $A$, the matrix $M_w$ is invertible over $C(x)$.
	
	\eqref{it:f_w}\, Note that
	\[U =\begin{pmatrix}
		-\frac{\ell_1}{\ell_0} & -\frac{\ell_2}{\ell_0}&\cdots & -\frac{\ell_{r}}{\ell_0}\\
		1    &  &    &  0\\
		&\ddots    & &\vdots  \\
		&   & 1      &0\\
	\end{pmatrix} SU =: \frac{1}{f_u}N_uSU,
	\]
	where $f_u \in C[x]$ and $N_u\in C[x]^{r\times r}$. So every root of $f_u$ must be a root of the trailing coefficient $\ell_0$. By~\eqref{EQ:SW-W}, we have
	\[W=T\left(\frac{1}{e_u}M_u\right)^{-1}\sigma(T)^{-1}SW=T\frac{1}{f_u}N_u\sigma(T)^{-1}SW =  \frac{1}{f_w}N_w SW.\]
	Using the same argument as in~\eqref{it:e_w}, we get
	\begin{align}\label{EQ:roots of f_w}
		\{\text{roots of $f_w$} \}&\subseteq \{\text{poles of $T$}\}\cup\{\text{roots of $f_u$}\}\cup\{\text{poles of $\sigma(T)^{-1}$}\}\nonumber\\
		&\subseteq Z \cup\{\text{roots of $\ell_0$}\}\cup Z'\nonumber\\
		&\subseteq\bigcup_{i=1}^I\,\{\a_{i,1},\a_{i,1}+1,\ldots, \beta_i\}.	
		\end{align}
	Since both $W$ and $SW$ are local integral bases at every $\a \leq \beta_i-1$ with $\alpha\in \alpha_i +\set Z$, the only possible roots of $f_w$ are $\beta_1,\ldots, \beta_I$. So $f_w$ is shift-free.
\end{proof}

\begin{example}\label{Exam: suitable_ord2}
	 Let $L=(x+2)(x+3)S^2-2(x+2)S+1\in \set C[x][S]$. It has two solutions $b_1= \frac{1}{x!}$ and $b_2=\frac{1}{(x+1)!}.$ All roots of $\ell_0\ell_2$ are $-3,-2 \in \set Z$. Let $\beta=-2$ and $Z=\{-2\}$. A
	 suitable basis of $A=\set C(x)[S]/\<L>$ at $\beta$ is $W=(1,(x+2)S)$. It is a local integral basis of $A$ at $Z$ and generates the same $\set C(x)_\alpha$-module as $U=\{1,S\}$ for all $\alpha\in\set C\setminus Z$. Then by Theorem~\ref{Thm: suitable basis},  $e_w$ is shift-free and $\beta$ is the only possible root of $e_w$. Indeed, we have $e_w=x+2$ and $M_w = \binom{\hphantom{-}0\,\hphantom{-}1}{-1\,\hphantom{-}2}$. Moreover, $f_w=1$ is shift-free and $N_w =\binom{2x + 4\hphantom{-}-x - 2}{x + 2\hphantom{-}\hphantom{--}0}$.
\end{example}

\section{Generalized Abramov-Petkov\v sek reduction}\label{sec: AP}
In this section, we use a suitable basis to decompose an element $f\in A$ into a summable part and a remainder such that the denominator of this remainder is shift-free. The lemma below presents a matrix version of Abramov-Petkov\v sek reduction formula in~\cite[Lemma 9]{AbramovePetkovsek02}.
\begin{lemma}\label{Lem: AP step}
	Let $W$ be a suitable basis of $A$ at $\{\beta_1,\ldots,\beta_I\}$. Let $e,\tilde e\in C[x]$ and $M,\tilde M\in C[x]^{r\times r}$ be such that $SW=\frac{1}{e}MW$ and $\frac{1}{\tilde e}\tilde M=(\frac{1}{e}M)^{-1}$. Let $p_i\in C[x]$ be the minimal polynomial of $\beta_i$ over $C$. Let $q\in C[x]$, $a\in C[x]^r$ and $\ell\in\set Z$.
	\begin{enumerate}[(i)]
		\item\label{it:AP-red1} If $\gcd(q, \sigma^j(p_i))=1$ for all $i\in\{1,\ldots, I\}$ and $j\in \set Z$, there exist $g\in A$ and $c\in C[x]^r$ such that
		\begin{equation}\label{EQ: AP-step}
			\frac{aW}{\sigma^\ell(q)}=\Delta g +  \frac{cW}{qe\tilde e}.
		\end{equation}
		\item\label{it:AP-red2} If $q=p_i^m$ with $i\in\{1,\ldots, I\}$ and $m>0$, there exist $g\in A$ and $c\in C[x]^r$ such that~\eqref{EQ: AP-step} also holds.
    \end{enumerate}
\end{lemma}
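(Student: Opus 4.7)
The plan is to mimic the classical Abramov--Petkov\v sek shift reduction, applied inside the $C[x]$-module spanned by $W$. The trivial case $\ell=0$ is handled immediately in either part by taking $g=0$ and $c=ae\tilde e$, so I assume $\ell\neq 0$ from here on.

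The key computational identity is that for any $g=bW/d$ with $b\in C[x]^r$ and $d\in C[x]$, using $SW=MW/e$,
\[
\Delta g \;=\; \frac{\sigma(b)M}{\sigma(d)\,e}\,W \;-\; \frac{b}{d}\,W.
\]
Because $\tfrac{1}{\tilde e}\tilde M$ is the inverse of $\tfrac{1}{e}M$, we have $M\tilde M=e\tilde e\,I_r$, so $\det M$ divides $(e\tilde e)^r$ and by Theorem~\ref{Thm: suitable basis} its roots lie inside $\{\beta_1,\dots,\beta_I\}$.

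The crucial preliminary is that $M$ is invertible modulo $\sigma^\ell(q)$. In Part~(i), since $q$ is coprime to every shift of every $p_i$, it is coprime to every irreducible factor of $e$ and of $\tilde e$; shifting preserves coprimality, so $\gcd(\sigma^\ell(q),e\tilde e)=1$ and hence $M$ is invertible modulo $\sigma^\ell(q)$. In Part~(ii) with $q=p_i^m$, the roots of $\sigma^\ell(p_i)$ are the $C$-conjugates of $\beta_i-\ell$; since the orbits $\alpha_j+\set Z$ are pairwise distinct and $\ell\neq 0$, no $\beta_j$ equals $\beta_i-\ell$, so $\gcd(\sigma^\ell(q),\det M)=1$, giving the invertibility again.

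For $\ell>0$, I solve (by inverting $M$ modulo $\sigma^\ell(q)$ and applying $\sigma^{-1}$) for $b\in C[x]^r$ satisfying $\sigma(b)M\equiv ae\pmod{\sigma^\ell(q)}$ and set $g=bW/\sigma^{\ell-1}(q)$. The identity then yields
\[
f-\Delta g \;=\; \frac{c_1W}{e}\;+\;\frac{bW}{\sigma^{\ell-1}(q)}
\]
for some $c_1\in C[x]^r$. Iterating $\ell$ times shifts the denominator all the way down to $q$, producing $f=\Delta(\cdots)+\frac{c'W}{e}+\frac{b^{(\ell)}W}{q}$; clearing to a common denominator gives a fraction with denominator $qe$, which trivially divides $qe\tilde e$. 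For $\ell<0$, take $g=-aW/\sigma^\ell(q)$, so the identity gives $f-\Delta g=\frac{\sigma(a)MW}{\sigma^{\ell+1}(q)\,e}$. If $\ell+1=0$ this is already of the form $\frac{cW}{qe}$. Otherwise $\gcd(\sigma^{\ell+1}(q),e)=1$ by the same coprimality argument, so partial fractions split the remainder as $\frac{uW}{\sigma^{\ell+1}(q)}+\frac{vW}{e}$, and I recurse on the first summand. After $|\ell|$ steps the accumulated $\frac{\cdot W}{e}$ pieces and the final $\frac{\cdot W}{qe}$ combine into a single $\frac{cW}{qe\tilde e}$.

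The main obstacle is establishing invertibility of $M$ modulo $\sigma^\ell(q)$; in Part~(ii) this rests specifically on the distinctness of the orbits carrying the $\beta_i$'s (so that conjugates of distinct $\beta_i$'s lie in different $\sigma$-orbits) together with $\ell\neq 0$. Once that is in hand, each reduction step is a single Euclidean-style inversion in $C[x]^r/\sigma^\ell(q)$, and the whole induction terminates after $|\ell|$ steps.
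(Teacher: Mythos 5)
Your proof is correct and, for the $\ell>0$ case, takes a genuinely different route from the paper's. The paper first rewrites $aW/\sigma^\ell(q)$ as $S(g_0)=\Delta g_0+g_0$ with $g_0=\sigma^{-1}\bigl(a\tilde M/(\sigma^{\ell}(q)\tilde e)\bigr)W$ — pulling one shift back through the precomputed global inverse $\tilde M/\tilde e$ — then splits $g_0$ by partial fractions over the coprime factors $\sigma^{\ell-1}(q)$ and $\sigma^{-1}(\tilde e)$, and applies a second telescoping step to move the $\sigma^{-1}(\tilde e)$ contamination into the denominator $e\tilde e$, yielding a remainder over $qe\tilde e$ by induction. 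You instead invert $M$ locally modulo $\sigma^\ell(q)$ and solve $\sigma(b)M\equiv ae\pmod{\sigma^\ell(q)}$, which peels off one shift and deposits the remainder directly over $e$ in a single step; after clearing denominators the remainder ends up over $qe$, which is marginally tighter than the stated $qe\tilde e$. The two mechanisms are dual: one relies on the globally available $\tilde M$ and a partial-fraction split, the other on local invertibility of $M$ modulo $\sigma^\ell(q)$, but both are variants of the Abramov--Petkov\v sek shift reduction and both rest on the same coprimality facts from Theorem~\ref{Thm: suitable basis}. For $\ell<0$ your formula and recursion match the paper's exactly. One small gloss in Part~(ii): you argue only that $\beta_i-\ell$ itself cannot equal any $\beta_j$, but the roots of $\sigma^\ell(p_i)$ are all the $C$-conjugates of $\beta_i-\ell$; you should invoke the paper's standing assumption that every conjugate of $\beta_i$ already lies in $\{\beta_1,\dots,\beta_I\}$, after which the orbit-disjointness argument applies to each conjugate, not just to $\beta_i-\ell$. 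This is easily patched and does not affect the structure of the argument.
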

\begin{proof}
	If $\ell =0$, we only need to take $g=0$ and $c=e\tilde ea$.
	
	If $\ell \geq 1$, we have
	\begin{align}\nonumber
		\frac{aW}{\sigma^\ell(q)} &= \frac{a}{\sigma^\ell(q)}\frac{\tilde M}{\tilde e}\frac{M}{e} W- \frac{\sigma^{-1}(a)}{\sigma^{\ell-1}(q)}\frac{\sigma^{-1}(\tilde M)}{\sigma^{-1}(\tilde e)}W + \frac{\sigma^{-1}(a)}{\sigma^{\ell-1}(q)}\frac{\sigma^{-1}(\tilde M)}{\sigma^{-1}(\tilde e)}W\\\label{EQ:AP step 1}
		&=\Delta\left(\frac{\sigma^{-1}(a)}{\sigma^{\ell-1}(q)}\frac{\sigma^{-1}(\tilde M)}{\sigma^{-1}(\tilde e)}W\right) + \frac{\sigma^{-1}(a)}{\sigma^{\ell-1}(q)}\frac{\sigma^{-1}(\tilde M)}{\sigma^{-1}(\tilde e)}W.
	\end{align}
	By the assumption and Theorem~\ref{Thm: suitable basis}.\eqref{it:f_w}, every irreducible factor of $\tilde e$ is a factor of $p_i$ for some $i\in\{1,\ldots, I\}$. So in either~\eqref{it:AP-red1} or~\eqref{it:AP-red2}, we have $\gcd(\sigma^{\ell-1}(q), \sigma^{-1} (\tilde e)) =1$. By the partial fraction decomposition of rational functions, there exist $v_1, v_2\in C[x]^r$ such that
	\begin{equation}\label{EQ: AP red pfd}
		\frac{\sigma^{-1}(a)}{\sigma^{\ell-1}(q)}\frac{\sigma^{-1}(\tilde M)}{\sigma^{-1}(\tilde e)}W=\frac{v_1W}{\sigma^{\ell-1}(q)} + \frac{v_2W}{\sigma^{-1}(\tilde e)}.
	\end{equation}
	Since
	\[\frac{v_2W}{\sigma^{-1}(\tilde e)} =\Delta\left(-\frac{v_2W}{\sigma^{-1}(\tilde e)}\right) + \frac{\sigma(v_2)MW}{\tilde e e},\]
	combining with \eqref{EQ:AP step 1} and~\eqref{EQ: AP red pfd}, we have
	\[\frac{aW}{\sigma^\ell(q)} \equiv \frac{v_1W}{\sigma^{\ell-1}(q)} + \frac{\sigma(v_2)MW}{\tilde e e} \mod \Delta(A).\]
	By the induction hypothesis, the first summand is congruent to $\frac{c_1W}{qe\tilde e}$ for some $c_1\in C[x]^r$. Setting $c = c_1 + q\sigma(v_2)M$ establishes Equation~\eqref{EQ: AP-step} in the case $\ell\geq 1$.
	
    If $\ell \leq -1$, we have
	\[\frac{aW}{\sigma^{\ell}(q)} = \Delta\left(-\frac{aW}{\sigma^{\ell}(q)}\right) + \frac{\sigma(a)MW}{\sigma^{\ell +1}(q)e}.\]
	When $\ell = - 1$, we take $g=-\frac{aW}{\sigma^{-1}(q)}$ and $c=\tilde e\sigma(a)M$. When $\ell < -1$, note that $\gcd(\sigma^{\ell+1}(q), e) = 1$ by the assumption and Theorem~\ref{Thm: suitable basis}.\eqref{it:e_w}. The induction can be completed in a similar way as in the case $\ell\geq 1$.
\end{proof}
Recall~\cite{AbramovPetkovsek01,AbramovePetkovsek02} that two irreducible polynomials $p,q\in C[x]$ are said to be {\em shift-equivalent} if $p\mid\sigma^k(q)$ for some $k\in \set Z$, denoted by $p\sim_x q$. The relation $\sim_x$ is an equivalence relation. By grouping together the irreducible factors in the same shift-equivalence classes, every polynomial $u\in C[x]$ has a factorization of the form
\[u = \prod_{i=1}^m\prod_{\ell = \ell_1}^{\ell_2} \sigma^\ell(q_i)^{n_{i,\ell}}\]
where $m,n_{i,\ell}\in\set N$, $\ell_1, \ell_2\in \set Z$ with $\ell_1\leq \ell_2$, $q_i\in C[x]$ is irreducible, and $q_i$, $q_{i'}$ are pairwise shift-inequivalent for all $1\leq i\neq i'\leq m$. From such a factorization, applying the partial fraction decomposition of rational functions to all coefficients of $f = \frac{aW}{u}\in A$ with $a\in C[x]^r$, we can decompose $f$ into the form
\begin{equation}\label{EQ: IPFD}
	f = \left(f_0 + \sum_{i=1}^m\sum_{\ell=\ell_1}^{\ell_2}\frac{a_{i,\ell}}{\sigma^\ell(q_i)^{n_{i,\ell}}}\right)W
\end{equation}
where $f_0,a_{i,j,\ell}\in C[x]^r$ and $\deg_x(a_{i,\ell})<\deg_x(q_i^{n_{i,\ell}})$. The {\em degree} of a vector in $C[x]^r$ is the maximal degree of all its entries.
\begin{prop}\label{Prop: AP-reduction}
	Let $W$ be a suitable basis of $A$ at $\{\beta_1,\ldots, \beta_I\}$. Then any element $f\in A$ can be decomposed as
	\begin{equation}\label{EQ: AP-reduction}
		f = \Delta g+ h\quad\text{and}\quad h=\frac{cW}{\tilde u},
	\end{equation}
	where $g\in A$, $c\in C[x]^r$, $\tilde u\in C[x]$ and the product $\tilde u\prod_{i=1}^I(x-\beta_i)$ is shift-free.
\end{prop}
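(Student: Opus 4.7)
The plan is to start from the partial-fraction expansion~\eqref{EQ: IPFD}
\[
f \;=\; \Bigl(f_0 + \sum_{i=1}^m\sum_{\ell=\ell_1}^{\ell_2}\frac{a_{i,\ell}}{\sigma^\ell(q_i)^{n_{i,\ell}}}\Bigr)W,
\]
absorb the polynomial part $f_0W$ directly into $h$ (it already has denominator $1$), and apply Lemma~\ref{Lem: AP step} to each of the fractional summands. Let $p_1,\dots,p_I\in C[x]$ be the minimal polynomials of $\beta_1,\dots,\beta_I$. For each irreducible $q_i$ appearing in~\eqref{EQ: IPFD} I would distinguish two cases according to whether or not $q_i\sim_x p_j$ for some~$j$.

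In the first case, $q_i$ is not shift-equivalent to any $p_j$, so $\gcd(q_i^{n_{i,\ell}},\sigma^k(p_j))=1$ for all $k\in\set Z$ and all $j$, and Lemma~\ref{Lem: AP step}\eqref{it:AP-red1} with $q=q_i^{n_{i,\ell}}$ yields
\[
\frac{a_{i,\ell}W}{\sigma^\ell(q_i)^{n_{i,\ell}}} \;\equiv\; \frac{c_{i,\ell}W}{q_i^{n_{i,\ell}}\,e\,\tilde e} \pmod{\Delta(A)}.
\]
In the second case, since $p_j$ is irreducible and shift-free, there are unique $j$ and $s\in\set Z$ with $q_i=\sigma^s(p_j)$, so $\sigma^\ell(q_i)^{n_{i,\ell}}=\sigma^{\ell+s}(p_j^{n_{i,\ell}})$, and Lemma~\ref{Lem: AP step}\eqref{it:AP-red2} with $q=p_j^{n_{i,\ell}}$ gives
\[
\frac{a_{i,\ell}W}{\sigma^\ell(q_i)^{n_{i,\ell}}} \;\equiv\; \frac{c_{i,\ell}W}{p_j^{n_{i,\ell}}\,e\,\tilde e} \pmod{\Delta(A)}.
\]
Summing these congruences over all $i,\ell$, collecting on a common denominator $\tilde u\in C[x]$ and grouping the $\Delta$-parts into a single $g\in A$, one obtains $f=\Delta g+\frac{cW}{\tilde u}$.

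It remains to verify that $\tilde u\prod_{i=1}^I(x-\beta_i)$ is shift-free. By construction, every irreducible factor of $\tilde u$ is either (a)~one of the case-1 $q_i$'s, or (b)~one of the $p_j$'s, because Theorem~\ref{Thm: suitable basis} forces every irreducible factor of $e$ and $\tilde e$ to equal some~$p_j$. The factors in (a) are irreducible (hence individually shift-free), pairwise shift-inequivalent by the structure of~\eqref{EQ: IPFD}, and shift-inequivalent to every $p_j$ by the case assumption; so their roots lie in $\set Z$-orbits pairwise disjoint from one another and from $\bigcup_j(\alpha_j+\set Z)$. The roots of the $p_j$'s are the Galois conjugates of the $\beta_j$'s, which by our standing hypothesis are all contained in $\{\beta_1,\dots,\beta_I\}$; and since the orbits $\alpha_1+\set Z,\dots,\alpha_I+\set Z$ are distinct, the $\beta_i$'s lie in pairwise distinct $\set Z$-orbits. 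Therefore the roots of $\tilde u\prod_{i=1}^I(x-\beta_i)$ occupy pairwise distinct $\set Z$-orbits, and the product is shift-free.

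The non-routine part is the last step: one has to track precisely which new irreducible factors Lemma~\ref{Lem: AP step} can introduce into the denominator, and Theorem~\ref{Thm: suitable basis} is exactly the tool that constrains those new factors to lie among the $p_j$'s, i.e.\ already covered by the correction factor $\prod_{i=1}^I(x-\beta_i)$.
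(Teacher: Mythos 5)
Your proof is correct and follows essentially the same route as the paper: decompose $f$ via \eqref{EQ: IPFD}, apply Lemma~\ref{Lem: AP step} to each summand according to whether $q_i$ is shift-equivalent to some $p_j$, collect on a common denominator, and invoke Theorem~\ref{Thm: suitable basis} together with the standing conjugacy assumption on the $\beta_i$'s to confirm shift-freeness. The only cosmetic difference is that the paper normalizes the shift-equivalence representatives so that $q_i=p_{i'}$ outright, whereas you absorb the shift offset $s$ into the $\ell$-parameter of the Lemma.
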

\begin{proof}
	 We write $f$ as in~\eqref{EQ: IPFD}. Let $p_i\in C[x]$ be the minimal polynomial of $\beta_{i}$. If $q_i$ is shift-equivalent to $p_{i'}$ for some $i'\in\{1,\ldots, I\}$, we may choose $q_i = p_{i'}$ as a representative in this shift equivalence class. By Lemma~\ref{Lem: AP step}, there exist $g\in A$ and $c_{i,\ell}\in C[x]^r$ such that
	\[f=\Delta g + \left(f_0 + \sum_{i=1}^m\sum_{\ell =\ell_1}^{\ell_2}\frac{c_{i,\ell}}{q_i^{n_{i,\ell}}e\tilde e}\right)W,\]
	where $SW = \frac{1}{e}MW$ and $\frac{1}{\tilde e}\tilde M = (\frac{1}{e}M)^{-1}$. By Theorem~\ref{Thm: suitable basis}, $e\tilde e\prod_{i=1}^I(x-\beta_i)$ is shift-free. Then we get~\eqref{EQ: AP-reduction} by setting $\tilde u =  e\tilde e\prod_{i=1}^mq_i^{n_i}$ with $n_i = \max_{\ell=\ell_1}^{\ell_2}\{n_{i,\ell}\}$ and $c = f_0v + \sum_{i=1}^m\sum_{\ell=\ell_1}^{\ell_2}( vc_{i,\ell})/(q_i^{n_{i,\ell}}e\tilde e)$. By the assumption on suitable bases, the orbits $\beta_i + \set Z$ with $1\leq i \leq I$ are distinct and all conjugates of $\beta_i$ belong to $\{\beta_1,\ldots, \beta_I\}$. So the product $p_{i'}\prod_{i=1}^I (x-\beta_I)$ is shift-free for all $1\leq i'\leq I$. Thus the polynomial $\tilde u$ satisfies the required condition.
\end{proof}

\begin{theorem}\label{Thm: AP-remainder}
	Let $h\in A$ be in~\eqref{EQ: AP-reduction} with $c=(c_1,\ldots,c_r)$ and $\gcd(\tilde u, c_1,\ldots,c_r)=1$. Let $e\in C[x]$ and $M\in C[x]^{r\times r}$ be such that $SW = \frac{1}{e}MW$. If $h$ is summable in $A$, then  $\tilde u$ divides $e$ and $h=\Delta(bW)$ with $b\in C[x]^r$.
\end{theorem}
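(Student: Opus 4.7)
The plan is to derive strong restrictions on the denominator of $g$ from the equation $h=\Delta g$, and then compare $\Delta g$ with $cW/\tilde u$ coordinate-wise in the basis $W$. Write $g=bW/v$ with $b\in C[x]^r$, $v\in C[x]$ and $\gcd(b_1,\dots,b_r,v)=1$; the aim is to prove that $v\in C^*$. Once that is done, $g=bW$ up to a constant, and a one-line computation will give $\tilde u\mid e$.

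To show $v$ is constant I would argue by contradiction: assume some $\alpha\in\bar C$ is a root of $v$, so that $\disp_\alpha(v)\geq 0$, and apply Proposition~\ref{Prop: disp after diff} to $h=\Delta g$ with the denominators $(\tilde u,v)$. Split into two cases according to the orbit $\alpha+\set Z$. If $\alpha+\set Z$ avoids every orbit $\alpha_i+\set Z$ that carries a root of $\ell_0\ell_r$, then Lemma~\ref{Lem: left bound} makes $\{1,S,\dots,S^{r-1}\}$ a local integral basis at every point of $\alpha+\set Z$; since this orbit is disjoint from the finite suitable-basis locus $Z$, $W$ generates the same $C(x)_\eta$-module as $\{1,S,\dots,S^{r-1}\}$ at every such $\eta$, and hence $W$ is itself a local integral basis on the whole orbit. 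Part~(i) of Proposition~\ref{Prop: disp after diff} then forces $\disp_\alpha(\tilde u)\geq 1$, contradicting the shift-freeness of $\tilde u$ (which is inherited from that of $\tilde u\prod_j(x-\beta_j)$). If instead $\alpha+\set Z=\alpha_i+\set Z$ for some $i$, Theorem~\ref{Thm: suitable basis}\eqref{it:e_w} shows that $\beta_i$ is the only possible root of $e$ in this orbit; combining the definition of a suitable basis on the upper segment $\{\alpha_{i,1}<\eta\leq\beta_i\}$ with Lemma~\ref{Lem: left bound} on the lower segment $\{\eta\leq\alpha_{i,1}\}$ (which lies outside $Z$) shows $W$ is a local integral basis at $\{\eta\in\alpha_i+\set Z:\eta\leq\beta_i\}$. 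Part~(ii) of Proposition~\ref{Prop: disp after diff} then yields $\disp_\alpha((x-\beta_i)\tilde u)\geq 1$, again contradicting shift-freeness, this time of $(x-\beta_i)\tilde u$, which still divides the shift-free polynomial $\tilde u\prod_j(x-\beta_j)$.

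Having shown $v\in C^*$, absorb it into $b$ and take $g=bW$ with $b\in C[x]^r$. A direct computation from $SW=e^{-1}MW$ gives
\[
  \Delta(bW)=\sigma(b)\frac{M}{e}W-bW=\frac{\sigma(b)M-eb}{e}\,W.
\]
Equating this with $cW/\tilde u$ in the basis $W$ produces the polynomial identity $\tilde u(\sigma(b)M-eb)=ec$ componentwise, so $\tilde u\mid ec_i$ for every $i$. Picking an irreducible factor $p$ of $\tilde u$ and comparing $p$-adic valuations: if $p^m$ exactly divides $\tilde u$ and $p^m\nmid e$, then $p$ must divide every $c_i$, contradicting $\gcd(\tilde u,c_1,\dots,c_r)=1$. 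Hence $\tilde u\mid e$.

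The main subtlety I expect is the second case above: one has to assemble local integrality of $W$ on the entire ``lower part'' of the orbit $\alpha_i+\set Z$ (not merely on the suitable-basis locus $Z$) by gluing the suitable-basis property with Lemma~\ref{Lem: left bound}, and one must use shift-freeness of the enlarged product $\tilde u\prod_j(x-\beta_j)$---not of $\tilde u$ alone---to absorb the extra factor $(x-\beta_i)$ that Proposition~\ref{Prop: disp after diff}\eqref{it: disp2} introduces.
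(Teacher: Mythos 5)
Your proposal is correct and follows essentially the same path as the paper's proof: write $g = bW/v$ in lowest terms, derive a contradiction with the shift-freeness of $\tilde u\prod_j(x-\beta_j)$ via Proposition~\ref{Prop: disp after diff} to conclude $v\in C^*$, then compare coefficients in the basis $W$ to obtain $\tilde u\mid e$. You spell out the case split hidden behind the paper's ``by Proposition~\ref{Prop: disp after diff} and Theorem~\ref{Thm: suitable basis}'' and give the gcd argument that the paper compresses into ``So $\tilde u$ divides $e$,'' but there is no difference in strategy.
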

\begin{proof}
	Suppose $h$ is summable in $A$. There exist $b=(b_1,\ldots,b_r)\in C[x]^r$ and $v\in C[x]$ such that $h = \Delta(\frac{1}{v}bW)$ and $\gcd(v,b_1,\ldots, b_r)=1$. We show that $v$ is a constant. Otherwise there exists $\alpha\in \bar C$ such that $\disp_\alpha(v) \geq 0$. Since $W$ is a suitable basis at $\{\beta_1,\ldots,\beta_I\}$, by Proposition~\ref{Prop: disp after diff} and Theorem~\ref{Thm: suitable basis}, we have that either $\disp_\alpha(\tilde u) \geq \disp_\alpha(v)+1\geq 1$ or $\disp_\alpha((x-\beta_i)\tilde u)\geq 1$ for some $i\in\{1,\ldots, I\}$. This contradicts the fact that $\tilde u\prod_{i=1}^I (x-\beta_i)$ is shift-free. Thus $v$ is a constant in $C$. Then
	\[h=\sigma(v^{-1}b)SW-v^{-1}bW = \left(\frac{1}{e}\sigma(v^{-1}b)M-v^{-1}b\right)W = \frac{cW}{\tilde u}.\]
	So $\tilde u$ divides $e$.
\end{proof}

\begin{corollary}\label{Cor: AP-remainder}
	Let $W$ be a suitable basis of $A$ at $\{\beta_1,\ldots, \beta_I\}$. Let $e\in C[x]$ and $M\in C[x]^{r\times r}$ be such that $SW = \frac{1}{e}MW$. Then any element $f\in A$ can be decomposed as
	\begin{equation}\label{EQ: AP-remainder}
		f = \Delta g+ \frac{1}{d}PW+\frac{1}{e}RW,
	\end{equation}
	where $g\in A$, $d\in C[x]$, $P,R\in C[x]^r$ with $\deg_x(P)<\deg_x(d)$ and $d\prod_{i=1}^I(x-\beta_i)$ being shift-free. Moreover, if $f$ is summable in $A$, then $P=0$.
\end{corollary}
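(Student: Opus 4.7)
The plan is to combine Proposition~\ref{Prop: AP-reduction} with one further partial-fraction step. Proposition~\ref{Prop: AP-reduction} already supplies $f = \Delta g + h$ with $h = \frac{cW}{\tilde u}$ and $\tilde u\prod_{i=1}^I(x-\beta_i)$ shift-free; cancelling any common polynomial divisor lets me assume $\gcd(\tilde u, c_1, \dots, c_r) = 1$. The remaining task is to split $\frac{cW}{\tilde u}$ into the requested form $\frac{PW}{d} + \frac{RW}{e}$ in such a way that $P$ vanishes exactly when $\tilde u$ divides $e$; once that is arranged, the ``moreover'' part will follow from Theorem~\ref{Thm: AP-remainder}, which asserts $\tilde u\mid e$ whenever $h$ is summable.

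For the splitting I will factor $\tilde u = \tilde u_1\tilde u_2$ into two coprime polynomials by assigning each irreducible factor $p$ of $\tilde u$ entirely to $\tilde u_1$ when its multiplicity in $\tilde u$ is at most its multiplicity in $e$, and entirely to $\tilde u_2$ otherwise. By construction $\tilde u_1\mid e$, $\gcd(\tilde u_1,\tilde u_2)=1$, and $\tilde u_2\mid\tilde u$, so $\tilde u_2\prod_{i=1}^I(x-\beta_i)$ inherits shift-freeness. Partial-fraction decomposition (applied componentwise in $C(x)$) then produces $Q,A,B\in C[x]^r$ with
\[
\frac{c}{\tilde u_1\tilde u_2}\;=\;Q + \frac{A}{\tilde u_1} + \frac{B}{\tilde u_2},\qquad \deg A<\deg \tilde u_1,\quad \deg B<\deg \tilde u_2.
\]
Since $\tilde u_1\mid e$, the polynomial $e/\tilde u_1$ lies in $C[x]$ and the first two summands combine to $\frac{(Qe+A(e/\tilde u_1))W}{e}$. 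Setting $d:=\tilde u_2$, $P:=B$, and $R:=Qe+A(e/\tilde u_1)$ then delivers the required decomposition with $\deg P<\deg d$ and $d\prod_{i=1}^I(x-\beta_i)$ shift-free.

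For the ``moreover'' part, suppose $f$ is summable. Then so is $h=f-\Delta g$, and Theorem~\ref{Thm: AP-remainder} (whose coprimality hypothesis I have arranged) gives $\tilde u\mid e$. In that situation every irreducible factor of $\tilde u$ has multiplicity at most its multiplicity in $e$, so the splitting assigns nothing to $\tilde u_2$; hence $\tilde u_2=1$, $\deg d=0$, and the bound $\deg P<\deg d$ forces $P=0$.

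I do not foresee a serious obstacle. The main care is in the splitting $\tilde u=\tilde u_1\tilde u_2$: because the $\tilde u$ produced by Proposition~\ref{Prop: AP-reduction} can share irreducible factors with $e$ at arbitrarily high multiplicity (see the formula $\tilde u=e\tilde e\prod q_i^{n_i}$ in its proof), one cannot simply take $d$ to be the part of $\tilde u$ coprime to $e$; the assignment must be made multiplicity by multiplicity so that the inclusion $\tilde u_1\mid e$ genuinely holds.
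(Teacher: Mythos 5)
Your proof is correct and follows essentially the same strategy as the paper: apply Proposition~\ref{Prop: AP-reduction}, split $\tilde u$ along the primes of $e$, and use partial fractions so that the part of $\tilde u$ dividing $e$ is absorbed into $\frac{1}{e}RW$; the ``moreover'' claim then comes from Theorem~\ref{Thm: AP-remainder} exactly as you argue. The only organizational difference is that the paper does the split prime-by-prime, dividing each numerator $c_i = a_ip_i^{n_i-k_i}+b_i$ so that the $d$-part keeps only the ``excess'' $\frac{b_i}{p_i^{n_i}}$ (with $\deg b_i<\deg p_i^{n_i-k_i}$), whereas you factor $\tilde u=\tilde u_1\tilde u_2$ once and let the full $B/\tilde u_2$ go into $d$; your $d$ can therefore be a bit larger than the paper's, but both satisfy the stated degree and shift-free conditions, so the corollary holds either way.
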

\begin{proof}
	Let $g\in A$ and $h=\frac{cW}{\tilde u}$ be as in~\eqref{EQ: AP-reduction}. Let $\{p_1,\ldots, p_m\}$ be the set of minimal polynomials of the $\beta_i$'s over $C$ for all $1\leq i\leq I$ and let $\tilde u = \tilde u\prod_{i=1}^mp_i^{n_i}$ be such that $\tilde u\in C[x]$ and $\gcd(\tilde u,\prod_{i=1}^mp_i)=1$. Applying the partial fraction decomposition of rational functions to all entries of $c$, we decompose $h$ as
	\begin{equation}\label{EQ:h_decomp_step1}
		h = \left(h_0 + \frac{\tilde c}{\tilde u} +\sum_{i=1}^m\frac{c_i}{p_i^{n_i}} \right)W,
	\end{equation}
	where $h_0, \tilde c, c_i\in C[x]^r$ satisfy $\deg_x(\tilde c)<\deg_x(\tilde u)$ and $\deg_x(c_i)<\deg_x(p_i^{n_i})$. Let $k_i$ be the multiplicity of $p_i$ in $e$. If $n_i>k_i$, by division with remainder, we write $c_i = a_ip_i^{n_i-k_i} + b_i$ such that $a_i,b_i\in C[x]^r$ and $\deg_x(b_i)<\deg_x(p_i^{n_i-k_i})$. Then we have
	\begin{equation}\label{EQ:h_decomp_step2}
		\frac{c_i}{p_i^{n_i}}W= \left(\frac{b_i}{p_i^{n_i}} + \frac{a_i}{p_i^{k_i}} \right)W.
	\end{equation}
	If $n_i\leq k_i$, then~\eqref{EQ:h_decomp_step2} holds for $b_i=0$ and $a_i = c_ip_i^{k_i-n_i}$. Let $d$ be the product of $\tilde u$ and $p_i^{n_i}$ for all $1\leq i\leq m$ with $b_i\neq 0$. Combining \eqref{EQ:h_decomp_step1} and~\eqref{EQ:h_decomp_step2}, we decompose $h$ into two parts:
	\begin{equation*}
		h = \left(\frac{\tilde c}{\tilde u} +\sum_{i=1}^m \frac{b_i}{p_i^{n_i}} \right) W +\left(h_0+ \sum_{i=1}^m \frac{a_i}{p_i^{k_i}} \right)W\triangleq  \frac{1}{d}PW + \frac{1}{e}RW,
	\end{equation*}
	where $P=(d/\tilde u)\tilde c  + \sum_{i=1}^m (d/p_i^{n_i})b_i$ and $R=eh_0+\sum_{i=1}^m (e/p_i^{k_i})a_i\in C[x]^r$. Then the product $d\prod_{i=1}^I(x-\beta_i)$ is shift-free and $\deg_x(P)<\deg_x(d)$. Since $f=\Delta g +h$, $f$ is summable in $A$ if and only if $h$ is summable in $A$. By Theorem~\ref{Thm: suitable basis}, $\{p_1,\ldots,p_m\}$ are all possible irreducible factors of $e$. Note that for each $i\in\{1,\ldots,m\}$, the multiplicity of $p_i$ in $d$ is either greater than $k_i$ or equal to $0$. By Theorem~\ref{Thm: AP-remainder}, if $h$ is summable in $A$, then $d$ belongs to~$C$ and hence $P=0$ because $\deg_x(P)<\deg_x(d)$. 
\end{proof}
\begin{example}
	Let $L\in \set C[x][S]$ be the same operator as in Example~\ref{Exam: suitable_ord2} and $W=(\omega_1,\omega_2) =(1, (x+2)S)$ be a suitable basis of $A=\set C(x)[S]/\<L>$ at $\{-2\}$. Then $e=x+2$ and $M= \binom{\hphantom{-}0\,\hphantom{-}1}{-1\,\hphantom{-}2}$. For \[f=\frac{1}{(x+1)(x+2)}\omega_1+\frac{x}{(x+1)(x+2)}\omega_2,\]
	applying the reduction described above, we can reduce the dispersion of the denominator of $f$ and decompose it into the form
	\[
	f = \Delta\Bigl({\underbrace{\frac{(-1,1)}{x+1}W}_g}\Bigr) +
	\underbrace{\frac1{(x+2)^2}}_{ 1/d}\underbrace{(1,-1)}_PW +
	\underbrace{\frac1{(x+2)}}_{1/e}\underbrace{(-1,2)}_RW.
	\]	
	By Theorem~\ref{Thm: AP-remainder}, $f$ is not summable in $A$ because $d\notin C$ and $P\neq 0$.
\end{example}
In the next section, we shall use a local integral basis at infinity to reduce the degree of $S$ in~\eqref{EQ: AP-remainder}.

\section{Reduction at infinity}\label{sec: reduce degree}
This section can be viewed as a discrete analog of Hermite reduction at infinity for D-finite functions~\cite{chen23}. It is also a generalization of the polynomial reduction in the hypergeometric case~\cite{chen15a}. Let $W=(\omega_1,\dots,\omega_r)$
be a local integral basis at infinity of~$A$. Let $\lambda\in \set Z$ be the minimal integer such that
\[\Delta W=x^{\lambda}MW\quad \text{with}\quad M=(m_{i,j})_{i,j=1}^r\in C(x)_\infty^{r\times r}.\]
For convenience, we define the {\em degree} of a rational function $p/q\in C(x)$ as $\deg_x(p)-\deg_x(q)$, denoted by $\deg_x(p/q)$. 
The {\em degree} of a matrix in $C(x)^{r\times r}$ is the maximal degree of all its entries. With this convention, we have $\lambda = \deg_x(x^\lambda M)$.
Let $f=x^k\sum_{i=1}^r a_i\omega_i\in A$ with $k\geq 0$ and $a_1,\dots,a_r\in C(x)_\infty$.
In order to reduce the degree $k$ of the coefficient vector of $f$, we seek
$b_1,\dots,b_r,c_1,\dots,c_r\in C(x)_\infty$ such that
\begin{equation}\label{EQ:hr-infty-goal}
	x^k\sum_{i=1}^ra_i\omega_i =\Delta\left(x^{k+1}\sum_{i=1}^rb_i\omega_i\right)+ x^{k-1}\sum_{i=1}^rc_i\omega_i.
\end{equation}
Using the relation $\Delta (uv)= \Delta(u)v + \sigma(u) \Delta(v)$ for all $u\in C(x)$ and $v\in A$, we get

 \begin{equation*}
 	x^k\sum_{i=1}^r a_i \omega_i
 	=\sum_{i=1}^r \left(\Delta(b_i) x^{k+1}\omega_i + \sigma(b_i)\Delta\left(x^{k+1}\omega_i \right) + c_ix^{k-1} \omega_i\right)
 \end{equation*}
Multiplying by $x^{-k}$ and rewriting the difference of $x^{k+1}\omega_i$, we obtain
\begin{align}\label{EQ:hr-expand1}
	&\sum_{i=1}^r a_i \omega_i
	=\sum_{i=1}^r \left(\Delta(b_i) x\omega_i + \sigma(b_i)x^{-k}\Delta\left(x^{k+1}\omega_i \right) + c_ix^{-1} \omega_i\right)\\\label{EQ:hr-expand2}
	&~~=\sum_{i=1}^r \left(\Delta(b_i) x\omega_i + \sigma(b_i)x^{-k}\Delta(x^{k+1}) \omega_i +x^{-k}\sigma(b_ix^{k+1})x^{\lambda}\sum_{j=1}^rm_{i,j} \omega_j+ c_ix^{-1} \omega_i\right).
\end{align}
Since $b_i\in C(x)_\infty$, we have $\Delta(b_i)x \in x^{-1} C(x)_\infty$. For example, $\Delta(1+\frac{1}{x} + \cdots)x=-\frac{1}{x}+\cdots$. Also $x^{-k}\Delta(x^{k+1}) = (k+1) + \binom{k+1}{2} x^{-1} +\cdots \equiv  (k+1)\mod x^{-1}$. So if $\lambda <-1$, then Equation~\eqref{EQ:hr-expand2} can be reduced modulo $x^{-1}C(x)_\infty$:
\begin{equation}\label{EQ:hr-simple-case}
	\sum_{i=1}^r a_i \omega_i 
	\equiv \sum_{i=1}^r(k+1)\sigma(b_i)\omega_i\mod x^{-1}.
\end{equation}
It follows that $b_i \equiv \sigma^{-1}((k+1)^{-1} a_i)= (k+1)^{-1}\sigma^{-1}(a_i)  \equiv (k+1)^{-1}a_i\mod x^{-1}$ is the unique solution of~\eqref{EQ:hr-simple-case} in $C(x)_\infty/\<x^{-1}>$.
If $\lambda\geq -1$, then multiplying~\eqref{EQ:hr-expand1} by $x^{-\lambda-1}$ and reducing this equation modulo $x^{-\lambda-2}$ yields
\begin{equation}\label{EQ:hr-mod}
	\sum_{i=1}^r x^{-\lambda-1} a_i\omega_i \equiv \sum_{i=1}^r \sigma(b_i)x^{-k-\lambda-1}\Delta\left(x^{k+1}\omega_i\right)\mod x^{-\lambda-2}.
\end{equation}
Let $\psi_i := x^{-k-\lambda-1}\Delta\left(x^{k+1}\omega_i\right)$ for $i=1,\ldots, r$. To reduce the degree $k$, we have to show that Equation~\eqref{EQ:hr-mod} always has a solution $(b_1,\ldots, b_r)$ in $\left( C(x)_\infty/\<x^{-\lambda-2}>\right)^r$. The solution $b=(b_1,\ldots, b_r)$ in $\left(C(x)_\infty/\<x^{-\lambda-2}>\right)^r$ is of the form $b_i = b_{i,0} + b_{i,1}\frac{1}{x} + \cdots + b_{i,\lambda+1}\frac{1}{x^{\lambda+1}}$ with $b_{i,j}\in C$. In practice, we hope that $x^{k+1} \sum_{i=1}^rb_i\omega_i$ in~\eqref{EQ:hr-infty-goal} has only polynomial coefficients and therefore assume $k\geq\max\{0,\lambda\}$.

\begin{prop}\label{Prop:psi}
	Let $W=\{\omega_1, \ldots, \omega_r\}$ be a local integral basis at infinity of $A$. Let $\lambda\in \set Z$ be the minimal integer such that $\Delta(W) = x^\lambda MW$ with $M\in  C(x)_\infty^{r\times r}$. For an integer $k\geq 0$, we define $\psi_i := x^{-k-\lambda-1}\Delta(x^{k+1}\omega_i)$. If $\lambda \geq -1$, then
	\[\sum_{i=1}^r C(x)_\infty \psi_i \subseteq \OO_\infty \subseteq x^{\lambda+1}\sum_{i=1}^r C(x)_\infty \psi_i.\]
	In particular, when $\lambda=-1$, we have $\sum_{i=1}^r C(x)_\infty\psi_i =\OO_\infty$. In this case, $\{\psi_1,\ldots, \psi_r\}$ forms a local integral basis at infinity.
\end{prop}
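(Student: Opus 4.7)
The plan is to compute each $\psi_i$ explicitly in the $W$-basis and then analyze the resulting transition matrix. First I would apply the shift product rule $\Delta(fg)=\Delta(f)g+\sigma(f)\Delta g$ with $f=x^{k+1}\in C(x)$ and $g=\omega_i\in A$, substituting $\Delta\omega_i=x^\lambda\sum_j m_{i,j}\omega_j$, to obtain
\[
\psi_i \;=\; \alpha\,\omega_i \,+\, \gamma\sum_{j=1}^r m_{i,j}\,\omega_j,
\]
where $\alpha:=x^{-k-\lambda-1}\bigl((x+1)^{k+1}-x^{k+1}\bigr)=(k+1)x^{-\lambda-1}+\binom{k+1}{2}x^{-\lambda-2}+\cdots$ has degree exactly $-\lambda-1$, and $\gamma:=(1+x^{-1})^{k+1}=1+(k+1)x^{-1}+\cdots$ is a unit in $C(x)_\infty$. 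Under the hypothesis $\lambda\geq -1$, both $\alpha$ and $\gamma$ lie in $C(x)_\infty$.

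For the first inclusion, each $\omega_j$-coefficient of $\psi_i$, namely $\alpha\delta_{i,j}+\gamma m_{i,j}$, lies in $C(x)_\infty$, so $\psi_i\in\OO_\infty=\sum_j C(x)_\infty\omega_j$, since $W$ is a local integral basis at infinity. For the second inclusion, I would introduce the matrix $A:=\alpha I+\gamma M\in C(x)_\infty^{r\times r}$ so that $\psi=A\omega$ and $\omega=A^{-1}\psi$ in column notation. The containment $\OO_\infty\subseteq x^{\lambda+1}\sum_i C(x)_\infty\psi_i$ amounts to every entry of $A^{-1}$ having $\nu_\infty\geq-(\lambda+1)$. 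Since the entries of $\mathrm{adj}(A)$ are polynomials of degree at most $r-1$ in entries of $A$ (all in $C(x)_\infty$), Cramer's rule $A^{-1}=(\det A)^{-1}\mathrm{adj}(A)$ reduces this to the key estimate $\nu_\infty(\det A)\leq \lambda+1$.

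To establish this estimate I would factor $A=\gamma(M+\mu I)$ with $\mu:=\alpha/\gamma\in x^{-\lambda-1}C(x)_\infty$ of leading coefficient $k+1\neq 0$. Since $\gamma^r$ is a unit, the task becomes $\nu_\infty(\det(M+\mu I))\leq\lambda+1$, which I would obtain by expanding $\det(M+\mu I)$ as a polynomial in $\mu$ and extracting lowest-order terms at infinity, invoking the minimality of $\lambda$ (which forces the leading-coefficient matrix $M_0$ of $M$ to be nonzero) to ensure the dominating coefficient of the expansion has valuation at most $\lambda+1$. The equality in the $\lambda=-1$ case then follows automatically, since $x^{\lambda+1}=1$ collapses both inclusions, and the $C(x)$-linear independence of the $\psi_i$ (from $\det A\neq 0$) upgrades the generation statement to the basis claim. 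The hard part will be this determinant estimate: the interplay between the $\alpha I$ part (contributing valuation $\lambda+1$ on the diagonal) and the possibly-singular $\gamma M$ part requires a careful check that no unwanted cancellation in $\det(M+\mu I)$ pushes its valuation above $\lambda+1$.
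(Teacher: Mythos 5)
Your first inclusion is fine and matches the spirit of the paper's argument: expanding by the shift product rule shows each $\psi_i$ is a $C(x)_\infty$-combination of the $\omega_j$'s (here the hypothesis $\lambda\geq -1$ is used to guarantee $\alpha\in C(x)_\infty$), and hence $\psi_i\in\OO_\infty$. The second inclusion, however, is where the real content lies, and your proposal does not establish it.

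There are two issues. First, the step from ``every entry of $A^{-1}$ has $\nu_\infty\geq -(\lambda+1)$'' to ``it suffices that $\nu_\infty(\det A)\leq\lambda+1$'' is a one-way implication, not a reduction: Cramer's rule gives $\nu_\infty(A^{-1}_{ij})\geq -\nu_\infty(\det A)$, so the determinant bound is sufficient, but cancellation inside the cofactors of $\operatorname{adj}(A)$ could make the entry-wise bound hold even if the determinant bound fails. You have replaced the goal with a strictly stronger statement. Second, and more seriously, the determinant estimate itself is left unproven, and the outline you give for it does not close the gap. Expanding $\det(M+\mu I)=\sum_{j=0}^r \mu^j c_{r-j}(M)$, the term $j=0$ contributes $\det M$ with $\nu_\infty\geq 0$ but possibly much larger; the term $j=1$ contributes $\mu\, c_{r-1}(M)$ with $\nu_\infty\geq\lambda+1$, and so on. Minimality of $\lambda$ only forces the leading-coefficient matrix $M_0=M(\infty)$ to be nonzero as a matrix; it says nothing about $\det(M_0)$, $\operatorname{tr}(M_0)$, or any other principal minor being nonzero, so there is no obvious control over which term dominates or whether they cancel. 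You flag this yourself as ``the hard part,'' and it really is: establishing the bound by direct matrix analysis would, as far as I can tell, require the same valuation-theoretic input that the paper invokes.

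The paper's proof of the second inclusion is genuinely different and avoids the matrix computation altogether. It argues by contradiction: if some $f\in\OO_\infty$ lay outside $x^{\lambda+1}\sum_i C(x)_\infty\psi_i$, one could normalize it to $f=x^{\lambda+2}\sum_i\sigma(c_i)\psi_i$ with all $c_i\in C(x)_\infty$ and $\nu_\infty(c_{i_0})=0$ for some $i_0$. Adding the (integral) correction $g=x^2\sum_i\Delta(c_i)\omega_i$ and collapsing via the product rule gives $f+g=x^{-k+1}\Delta\bigl(x^{k+1}h\bigr)$ with $h=\sum_i c_i\omega_i$. The two crucial inputs are then: a lemma characterizing integral bases, which forces $0\leq\val_\infty(h)<1$ because one coefficient $c_{i_0}$ is a unit; and Proposition~\ref{Prop:val_under_diff}, which gives $\val_\infty(\Delta(x^{k+1}h))\leq\val_\infty(x^{k+1}h)+1$ since $\val_\infty(x^{k+1}h)=-k-1+\val_\infty(h)\neq 0$ (using $k\geq 0$). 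Combining, $\val_\infty(f+g)<0$, contradicting integrality of $f$. This is exactly the kind of ``valuation drops by at most one under $\Delta$'' argument your proposal would ultimately need to justify the determinant estimate; without it, the proposal is incomplete.
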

\begin{proof}
	We prove this proposition using the same technique as in \cite[Lemma 10]{chen17a} and~\cite[Proposition 6]{chen23}. To show $\sum_{i=1}^r C(x)_\infty \psi_i \subseteq \OO_\infty$, we only need to show that for every $i=1,\ldots,r$, the element $\psi_i$ is integral at infinity. Expanding $\psi_i$, we get $\psi_i = x^{-k-\lambda-1}\Delta(x^{k+1}) \omega_i +x^{-k-\lambda-1}\sigma(x^{k+1}) \Delta(\omega_i)$. Since $\Delta(W) =x^{\lambda} MW$ and $M\in C(x)_\infty^{r\times r}$, it follows that the second term in $\psi_i$ is integral at infinity. The first term in $\psi_i$ is also integral at infinity because  $\lambda+ 1\geq 0$. So $\psi_i$ is integral at infinity.
	
	Next we shall prove $\OO_\infty \subseteq x^{\lambda+1}\sum_{i=1}^r  C(x)_\infty \psi_i$. Suppose to the contrary that there exists an element $f\in \OO_\infty\setminus x^{\lambda+1}\sum_{i=1}^r C(x)_\infty\psi_i$. Furthermore, we can find such an element $f$ of the form
	\[f= x^{\lambda+2}\sum_{i=1}^r \sigma(c_i) \psi_i\quad\text{with } c_i\in  C(x)_\infty \text{ and } \nu_\infty(c_i)=0 \text{ for some } i.\]
	Taking $f=0$, we shall prove that the $\{\psi_1,\ldots,\psi_r\}$ is linearly independent.

        Further let $g=x^2\sum_{i=1}^r\Delta(c_i)\omega_i$, which is integral at infinity for the same reason as between \eqref{EQ:hr-expand2} and~\eqref{EQ:hr-simple-case}. Then also their sum
	\begin{align*}
		f+g &= x^{-k+1}\sum_{i=1}^r \left(\sigma(c_i)\Delta\bigl(x^{k+1}\omega_i\bigr)
		+ \Delta(c_i)x^{k+1}\omega_i \right) \\
		&= x^{-k+1} \sum_{i=1}^r \Delta\bigl(c_ix^{k+1}\omega_i\bigr)=x^{-k+1}\Delta\left(x^{k+1}h\right)
	\end{align*}
	must be integral, where $h=\sum_{i=1}^rc_i\omega_i$.
	
		Since $\{\omega_1,\ldots,\omega_r\}$ is a local integral basis at infinity, by \cite[Lemma 2]{chen23} (it also works in the setting of P-recursive sequences), we have $0 \leq \val_\infty(h) <1$.  Note that
	\[\val_\infty(x^{k+1}h)= -k-1+\val_\infty(h)\leq -1 + \val_\infty(h)< 0;\]
	here we use the assumption that $k\geq 0$, because $k=-1$ and $\val_\infty(h)=0$ imply that $\val_\infty(x^{k+1}h)=0$. Since $\val_\infty(x^{k+1}h)\neq0$, by Proposition~\ref{Prop:val_under_diff} we get
	\[\val_\infty(x^{-k+1} \Delta(x^{k+1}h))\leq k-1-k-1+ \val_\infty(h)+1=\val_\infty(h)-1<0.\]
	So $x^{-k+1}\Delta(x^{k+1}h)=f+g$ is not locally integral at infinity, which contradicts the integrality of~$f$. Hence $\OO_\infty \subseteq x^{\lambda+1}\sum_{i=1}^r C(x)_\infty \psi_i$.
\end{proof}
Note that the shift operator $\sigma$ does not change the valuation $\nu_\infty$ of a rational function. The following theorem follows with the same proof as in~\cite[Theorem 7]{chen23}.
\begin{thm}\label{Thm:hr}
	Using the same notations as in Proposition~\ref{Prop:psi}, let $k\geq0$ and $\lambda \geq -1$. Then for any $a_1,\ldots, a_r\in C(x)_\infty$, the linear system
	\begin{equation*}
		\sum_{i=1}^r x^{-\lambda-1}a_i \omega_i = \sum_{i=1}^r \sigma(b_i)\psi_i
	\end{equation*}
	has a solution $(b_1, \ldots, b_r)$ in $\left(C(x)_\infty/\<x^{-\lambda-2}>\right)^r$.
\end{thm}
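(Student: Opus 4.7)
The plan is to apply Proposition~\ref{Prop:psi} and then leverage the fact that the shift automorphism $\sigma$ acts invertibly on $C(x)_\infty/\<x^{-\lambda-2}>$. First I would observe that the left-hand side
\[
h := \sum_{i=1}^r x^{-\lambda-1} a_i \omega_i
\]
lies in the $C(x)_\infty$-submodule $x^{-\lambda-1}\OO_\infty$ of $A$: since $W$ is a local integral basis at infinity we have $\OO_\infty = \sum_{i=1}^r C(x)_\infty \omega_i$, and each $a_i\in C(x)_\infty$. Invoking Proposition~\ref{Prop:psi}, whose hypothesis $\lambda\geq -1$ is exactly the one assumed here, gives $\OO_\infty \subseteq x^{\lambda+1}\sum_{i=1}^r C(x)_\infty \psi_i$, and after multiplication by $x^{-\lambda-1}$ this yields
\[
x^{-\lambda-1}\OO_\infty \ \subseteq\ \sum_{i=1}^r C(x)_\infty\psi_i.
\]
Thus there exist $c_1,\ldots,c_r \in C(x)_\infty$ with $h = \sum_{i=1}^r c_i \psi_i$ as an exact identity in $A$.

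The remaining point is to realize each $c_i$ as $\sigma(b_i)$ for some $b_i \in C(x)_\infty$ that is well defined modulo $\<x^{-\lambda-2}>$. The shift automorphism $\sigma$ of $C(x)$ preserves the valuation $\nu_\infty$, because shifting does not alter the degrees of numerator or denominator, so $\sigma(C(x)_\infty) = C(x)_\infty$. Moreover, $\sigma(x^{-\lambda-2}) = (x+1)^{-\lambda-2}$ is the product of $x^{-\lambda-2}$ with the unit $(1+1/x)^{-\lambda-2}$ of $C(x)_\infty$, hence $\sigma(\<x^{-\lambda-2}>) = \<x^{-\lambda-2}>$ and $\sigma$ descends to an automorphism of $C(x)_\infty/\<x^{-\lambda-2}>$. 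Setting $b_i := \sigma^{-1}(c_i) \in C(x)_\infty$ then gives the exact identity $h = \sum_{i=1}^r \sigma(b_i)\psi_i$ in $A$, which in particular fulfils the congruence requested in the theorem modulo $x^{-\lambda-2}$, with $(b_1,\dots,b_r)$ well defined in $(C(x)_\infty/\<x^{-\lambda-2}>)^r$.

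The main obstacle is essentially already disposed of by Proposition~\ref{Prop:psi}, which sandwiches $\OO_\infty$ between $\sum_{i=1}^r C(x)_\infty\psi_i$ and $x^{\lambda+1}\sum_{i=1}^r C(x)_\infty\psi_i$. Once that module-theoretic pinching is in hand, the remaining content of Theorem~\ref{Thm:hr} is the routine verification that $\sigma$ is compatible with reduction modulo $\<x^{-\lambda-2}>$, so the proof is exactly the shift-case translation of the analogous D-finite argument in~\cite[Theorem~7]{chen23}.
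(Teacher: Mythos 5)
Your proof is correct and follows essentially the same route as the paper: both rest on the inclusion $\OO_\infty \subseteq x^{\lambda+1}\sum_{i=1}^r C(x)_\infty\psi_i$ from Proposition~\ref{Prop:psi} together with the observation that $\sigma$ preserves $C(x)_\infty$ and the ideal $\langle x^{-\lambda-2}\rangle$. You obtain the coefficients $c_i\in C(x)_\infty$ directly from that module inclusion, whereas the paper first solves over $C(x)$ (after noting $\{\psi_1,\dots,\psi_r\}$ is a basis) and only then shows the coefficients lie in $C(x)_\infty$ --- a minor reordering, not a different argument.
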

\if t\completeproof
\blue
\begin{proof}
	By Proposition~\ref{Prop:psi}, the $C(x)_\infty$-module generated by \[\left\{x^{-\lambda-1}\psi_1, \ldots, x^{-\lambda-1}\psi_r\right\}\]
	contains a submodule $\OO_\infty$ of rank $r$. So $\{\psi_1, \ldots, \psi_r\}$ is linearly independent over $C(x)$. Then there exist $t_1,\ldots, t_r\in  C(x)$ such that
	$\sum_{i=1}^r x^{-\lambda-1} a_i \omega_i =\sum_{i=1}^r \sigma(t_i)\psi_i$.
	
	To find a solution $b_i$, we have to show that $t_i\in C(x)_\infty$ for all $i=1,\ldots, r$.  Since $a_i\in C(x)_\infty$ and the $\omega_i$'s are integral at infinity, the element
	\[\sum_{i=1}^ra_i\omega_i =x^{\lambda+1}\sum_{i=1}^r\sigma(t_i)\psi_i\]
	is integral at infinity. By Proposition~\ref{Prop:psi},
	\[x^{\lambda+1}\sum_{i=1}^r\sigma(t_i)\psi_i \in \OO_\infty\subseteq x^{\lambda+1}\sum_{i=1}^r C(x)_\infty\psi_i.\]
	Then $\sum_{i=1}^r\sigma(t_i)\psi_i \in \sum_{i=1}^r C(x)_\infty\psi_i.$ Since $\{\psi_1, \ldots, \psi_r\}$ are linearly independent over $C(x)$, we have $\sigma(t_i)\in C(x)_\infty$ for all $i$. Then $t_i\in C(x)_\infty$ for all $i$ as claimed because $\sigma$ does not change the valuation of a rational function in $x^{-\lambda-2}C(x)_\infty$. Note that \[\text{$b_i\equiv t_i\mod x^{-\lambda-2}$\quad if and only if\quad $\sigma(b_i)\equiv \sigma(t_i)\mod x^{-\lambda-2}$}.\]
	So $b_i\equiv t_i\mod x^{-\lambda-2}$ is a solution in $x^{-\lambda-2}C(x)_\infty$.
\end{proof}
\black
\fi
\begin{example}
		Let $L\in \set C[x][S]$ be the same operator as in Example~\ref{Exam: suitable_ord2}. A local integral basis at infinity of $A=\set C(x)[S]/\<L>$ is given by $\omega_1=(x^4+2x^3)S-x^3-x^2$ and $\omega_2=(x^3+2x^2)S-x^2$. Then
		\[\begin{pmatrix}
			\Delta \omega_1\\
			\Delta \omega_2
			\end{pmatrix} = x^\lambda\begin{pmatrix}
			\frac{-x^3-x^2+2x+1}{x^2(x+2)}&0\\
			0&\frac{-x^3-x^2+2x+1}{x^2(x+2)}
		\end{pmatrix}
		\begin{pmatrix}
			\omega_1\\
			\omega_2
		\end{pmatrix}\]
		with  $\lambda =0 $. We apply the reduction at infinity to  \[f=\frac{x^3+2}{x^2(x+2)}\omega_1+\frac{2x^3+3}{x^2(x+2)}\omega_2.\]
		So we start with $k = 0$, $a_1=(x^3+2)/(x^2(x+2))$ and $a^2 = (2x^3+3)/(x^2(x+2))$.
		Then Equation~\eqref{EQ:hr-mod} leads to the following linear system for the unknowns $b_1$ and $b_2$:
		\[(x^{-1}a_1,x^{-1}a_2)\equiv (\sigma(b_1),\sigma(b_2))\begin{pmatrix}
			\frac{-x^4 - x^3 + 3x^2 + 3x + 1}{x^3(x+2)} &0\\
			0& 	\frac{-x^4 - x^3 + 3x^2 + 3x + 1}{x^3(x+2)}
		\end{pmatrix}\mod x^{-2}.\]
		By Theorem~\ref{Thm:hr}, this system always has a solution. Indeed, we find a solution $b_1=-\frac{1}{x}$ and $b_2 = -\frac{2}{x}$. Then one step of the reduction at infinity simplifies $f$ to
		\[f = \Delta\Bigl(-\omega_1-2\omega_2\Bigr) +\frac{x^2-2x+1}{x^2(x+2)}\omega_1 + \frac{x^2-2x+2}{x^2(x+2)}\omega_2.\]	
\end{example}

\begin{remark}
	Let $W$ be a local integral basis at infinity of $A=C(x)[S]/\<L>$. Let $\lambda \in \set Z$ be the minimal integer such that $\Delta W = x^\lambda MW$ with $M\in C(x)_\infty^{r\times r}$. If the operator $L$ (of order $r$) admits $r$ linearly independent solutions in
	\[\bar C[[[x^{-1}]]] := \bigcup _{\alpha\in \bar C}x^\alpha \bar C[[x^{-1/v}]][\log(x)],\]
    then $\lambda \leq -1$. This is similar to the Fuchsian D-finite case, see~\cite[Lemma 4]{chen17a}.
\end{remark}

Let $W=\{\omega_1, \ldots,\omega_r\}$ be a local integral basis at infinity. Let $\lambda\in \set Z$ be the minimal integer such that $\Delta W = x^\lambda MW$ with $M\in  C(x)_\infty^{r\times r}$. By a repeated application of the reduction at infinity, one can reduce the degree in $x$ as far as possible and decompose $f\in A$ as
\begin{equation}\label{EQ:hr-remainder-infty}
	f= \Delta g + h\quad \text{with}\quad h =\sum_{i=1}^r h_i\omega_i
\end{equation}
where $g\in A$, $h_i\in C(x)$ with $\deg_x(h_i) < \max\{0,\lambda\}$ for all $i$ and  the coefficients of $g$ are polynomials. The following lemma follows from Proposition~\ref{Prop:val_under_diff} with the same proof as in~\cite[Lemma 10]{chen23}.
\begin{lemma}\label{LEM:degree}
	Let $h \in A$ be as in~\eqref{EQ:hr-remainder-infty}. If $h$ is summable in $A$, then $h=\Delta(\sum_{i=1}^r b_i\omega_i)$ with $b_i\in C(x)$ and $\deg_x(b_i)\leq\max\{0,\lambda\}$ for all $i\in\{1,\ldots,r\}$.
\end{lemma}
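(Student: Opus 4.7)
The plan is a valuation-at-infinity argument, directly parallel to the D-finite case treated in~\cite[Lemma 10]{chen23}. The key observation is that the hypothesis $\deg_x(h_i)<\max\{0,\lambda\}$ translates, via the integrality of $W$ at infinity, into a lower bound on $\val_\infty(h)$; Proposition~\ref{Prop:val_under_diff} then forces $\val_\infty(g)$ to be at least $\val_\infty(h)-1$ for any $g$ with $\Delta g=h$; and integrality of the basis finally lets us convert this valuation bound back into a coefficient degree bound.

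Let $K:=\max\{0,\lambda\}$ and suppose $h=\Delta g$ with $g=\sum_{i=1}^r g_i\omega_i\in A$. The hypothesis $\deg_x(h_i)<K$ gives $\nu_\infty(h_i)\geq -K+1$ for every $i$; combined with $\val_\infty(\omega_i)\geq 0$ (since $W$ is integral at infinity) and parts~(ii)--(iii) of Proposition~\ref{Prop: value function}, this yields
\[
  \val_\infty(h)\;\geq\;\min_i\bigl(\nu_\infty(h_i)+\val_\infty(\omega_i)\bigr)\;\geq\;-K+1.
\]

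Next I would invoke Proposition~\ref{Prop:val_under_diff}: if $\val_\infty(g)\neq 0$ it gives $\val_\infty(g)\geq \val_\infty(\Delta g)-1=\val_\infty(h)-1\geq -K$, while if $\val_\infty(g)=0$ the same bound holds trivially because $K\geq 0$. In either case $\val_\infty(g)\geq -K$, equivalently $\val_\infty(x^{-K}g)\geq 0$ by Proposition~\ref{Prop: value function}(ii), so $x^{-K}g\in\OO_\infty$. Since $W$ is a local integral basis at infinity, $\OO_\infty$ is freely generated over $C(x)_\infty$ by $\omega_1,\dots,\omega_r$, so we can write $x^{-K}g=\sum_i c_i\omega_i$ with $c_i\in C(x)_\infty$. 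Uniqueness of the expansion in $W$ then forces $g_i = x^K c_i$, whence $\nu_\infty(g_i)\geq -K$, i.e.\ $\deg_x(g_i)\leq K$. Setting $b_i:=g_i$ finishes the argument.

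The only mildly delicate step is this valuation-to-degree translation in the last paragraph; it is where integrality of $W$ at infinity is used essentially, to identify $\OO_\infty$ with $\bigoplus_i C(x)_\infty\,\omega_i$ and hence to pull the bound on $\val_\infty(g)$ down onto each individual coefficient $g_i$. Everything else is routine bookkeeping with the valuation properties collected in Propositions~\ref{Prop: value function} and~\ref{Prop:val_under_diff}.
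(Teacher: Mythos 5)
Your proof is correct, and it follows the same basic strategy as the paper's (a valuation-at-infinity argument resting on Proposition~\ref{Prop:val_under_diff}, as in \cite[Lemma~10]{chen23}), but with a different logical organization that is worth noting. The paper's version argues by contradiction: it assumes $\tau:=\min_i\nu_\infty(b_i)<\min\{0,-\lambda\}$, invokes the two-sided quantitative estimate $\tau\le\val_\infty(H)<\tau+1$ for integral bases (\cite[Lemma~2]{chen23}) to force $\val_\infty(H)$, and hence $\val_\infty(h)=\val_\infty(\Delta H)$, below the bound $-K+1$ already established for $\val_\infty(h)$. You instead argue directly: after obtaining $\val_\infty(g)\ge -K$ from Proposition~\ref{Prop:val_under_diff}, you rescale to get $x^{-K}g\in\OO_\infty$ and then use only the defining property of a local integral basis at infinity --- that $\OO_\infty$ is the $C(x)_\infty$-span of $W$, with coefficients read off from the unique $C(x)$-expansion --- to pull the valuation bound back onto each $g_i$. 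This sidesteps the explicit two-sided estimate and makes the role of the integral basis more transparent, at the (small) cost of the extra case split on $\val_\infty(g)=0$, which you handle correctly since $K\ge0$. Both routes are valid and of comparable length; the underlying content is the same.
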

\if t\completeproof\blue
\begin{proof}
	Suppose $h$ is summable in $A$. Then there exists $H=\sum_{i=1}^r b_i\omega_i\in A$ with $b_i\in C(x)$ such that $h=\Delta H$. By~\eqref{EQ:hr-remainder-infty},  we know the coefficients of $h$ satisfy $\deg_x(h_i) < \max\{0,\lambda\}$, which implies $\nu_\infty(h_i)> \min\{0,-\lambda\}$. Since the $\omega_i$'s are integral at infinity, it follows that $\val_\infty(h)\geq \min_{i=1}^r\{v_\infty(h_i) + \val_\infty(\omega_i)\}> \min\{0,-\lambda\}$.
	
	We want to show that $\deg_x(b_i)\leq \max\{0,\lambda\}$ for all $i$, i.e., $\nu_\infty(b_i) \geq \min\{0,-\lambda\}$ for all~$i$. Suppose to the contrary that $\tau:=\min_{i=1}^r\{\nu_\infty(b_i)\} < \min\{0,-\lambda\}$. Since $\{\omega_1,\ldots, \omega_r\}$ is a local integral basis at infinity, by~\cite[Lemma 2]{chen23} we get $\val_\infty(H) < \tau +1 \leq \min\{0,-\lambda\}$. So $\val_\infty(H)\neq 0$. By Proposition~\ref{Prop:val_under_diff} we have
	\[\val_\infty(h) \leq \val_\infty(H) + 1\leq \min\{0,-\lambda\}.\]
	This leads to a contradiction. So $\deg_x(b_i)\leq \max\{0,\lambda\}$ for all $i$.
\end{proof}
\fi\black
\section{An additive decomposition}\label{sec:add}
Now we combine the generalized Abramov-Petkov\v sek reduction and the reduction at infinity to decompose a P-recursive sequence $f$ as $f=\Delta g + h$ such that $f$ is summable if and only if the remainder $h$ is zero. This procedure is similar to the hypergeometric case~\cite{chen15a}, P-recursive case~\cite{vanderHoeven18b,BrochetBruno23} and the differential case~\cite{bostan13a,chen16a,chen17a, vanderHoeven21,bostan18a,chen23}. It provides an alternative method for solving the accurate integration problem for P-recursive sequences~\cite{AbramovVanHoeij1999}.

In this section, we need two bases to perform the generalized Abramov-Petkov\v sek reduction and the reduction at infinity, respectively. Let $W=(\omega_1,\ldots,\omega_r)$ be a suitable basis of~$A$ at $\{\beta_1,\ldots, \beta_I\}$ that is normal at infinity. By the same proof as in~\cite[Lemma 18]{chen17a}, there exists $T = \diag\bigl(x^{\tau_1}, \ldots, x^{\tau_r}\bigr) \in C(x)^{r\times r}$ with $\tau_i\in \set Z$ such that $V := TW$ is a local integral basis at infinity. Let $e, a\in C[x]$ and $M, B\in C[x]^{r \times r} $ be such that $SW = \frac{1}{e}MW$ and $\Delta V=\frac{1}{a}BV$. Since the difference of $V$ is
$\Delta V = (S-1)TW = (\frac{1}{e}\sigma(T)M -T)T^{-1}V,$
we may assume that $a=x^{\lambda_1}(x+1)^{\lambda_2} e$ for some $\lambda_1,\lambda_2\in \set N$. For $\mu, \delta\in \set Z$ with $\mu \leq \delta$, we define a subspace of Laurent polynomials in $C[x,x^{-1}]$ as $C[x]_{\mu,\delta}:= \{\sum_{i=\mu}^{\delta} a_i x^i\,|\,a_i\in C\}$. The following theorem decomposes any $f\in A$ into a summable part and a remainder such that the remainder belongs to a finite dimensional vector space over~$C$.

\begin{theorem}\label{Thm:add decomp}
	Let $W, V\in A^r$ be as described above. Then any element $f\in A$ can be decomposed into
	\begin{equation}\label{EQ:add}
		f = \Delta g + \frac{1}{d} PW + \frac{1}{a} QV,
	\end{equation}
	where $g\in A$, $d\in C[x]$, $P\in C[x]^r$, $Q\in C[x]_{\mu,\delta}^r$ with $\deg_x(P) < \deg_x(d)$, $\mu = \min\{-\tau_1, \ldots, -\tau_r, 0\}$, $\delta = \max\{\deg_x(a),\deg_x(B)\}-1$ and the product $d\prod_{i=1}^I(x-\beta_i)$ being shift-free. Moreover, $f$ is summable in $A$ if and only if $R=0$ and
	\[\frac{1}{a}QV\in \Delta(U) \quad \text{with} \quad U =\left\{cV \mid c\in C[x]_{\mu, \delta'}^r\right\},\]
	where $\delta'=\max\{0, \deg_x(B) -\deg_x(a)\}$.
\end{theorem}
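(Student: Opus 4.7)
The plan is to chain the two reductions built in Sections~\ref{sec: AP} and~\ref{sec: reduce degree}: first apply the generalized Abramov--Petkov\v sek reduction to eliminate the singularities in $\bar C$, then feed the residual part into the reduction at infinity to trim the degree in $x$. Throughout one must carefully track how coefficients change when passing between the bases $W$ and $V$ via $V=TW$ with $T=\diag(x^{\tau_1},\dots,x^{\tau_r})$.

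\textbf{Step 1 (finite places).} Apply Corollary~\ref{Cor: AP-remainder} to $f$ with respect to the suitable basis $W$: there exist $g_1\in A$, $d\in C[x]$ with $d\prod_{i=1}^I(x-\beta_i)$ shift-free, and $P,R\in C[x]^r$ with $\deg_x(P)<\deg_x(d)$, such that
\[
  f \;=\; \Delta g_1 + \tfrac{1}{d}PW + \tfrac{1}{e}RW.
\]
The first two summands already match the claimed form, so only $\tfrac{1}{e}RW$ still needs to be massaged.

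\textbf{Step 2 (change of basis).} Using $W=T^{-1}V$ and the assumption $a=x^{\lambda_1}(x+1)^{\lambda_2}e$, rewrite
\[
  \tfrac{1}{e}RW \;=\; \tfrac{1}{e}(RT^{-1})V \;=\; \tfrac{1}{a}\bigl(x^{\lambda_1}(x+1)^{\lambda_2}\,RT^{-1}\bigr)V \;=:\;\tfrac{1}{a}\widetilde R\,V,
\]
where $\widetilde R_i\in C[x,x^{-1}]$ with lowest Laurent-degree at least $\mu=\min\{-\tau_1,\dots,-\tau_r,0\}$, because multiplication by the $i$-th diagonal entry of $T^{-1}$ introduces a factor $x^{-\tau_i}$.

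\textbf{Step 3 (infinity).} Iterate Theorem~\ref{Thm:hr} on $\tfrac{1}{a}\widetilde RV$ as described around~\eqref{EQ:hr-remainder-infty}. This yields $g_2\in A$ and $h_1,\dots,h_r\in C(x)$ with $\deg_x(h_i)<\max\{0,\lambda\}$ (where $\lambda=\deg_x(B)-\deg_x(a)$), such that $\tfrac{1}{a}\widetilde RV=\Delta g_2+\sum_i h_iV_i$. Setting $Q_i:=a\,h_i$ gives $\deg_x(Q_i)\le\max\{\deg_x(a),\deg_x(B)\}-1=\delta$. Since each step of the infinity reduction only updates the coefficient vector by elements of $C(x)_\infty$ and by shifts thereof, it cannot create Laurent monomials of degree below $\mu$ beyond those already present in $\widetilde R$, so $Q\in C[x]_{\mu,\delta}^r$. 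Combining with Step~1 and setting $g:=g_1+g_2$ delivers the decomposition~\eqref{EQ:add}.

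\textbf{Step 4 (summability criterion).} The ``if'' direction is immediate from~\eqref{EQ:add}. For ``only if'', suppose $f$ is summable. Then from Step~1 the element $\tfrac{1}{d}PW+\tfrac{1}{e}RW$ is summable as well, and the last clause of Corollary~\ref{Cor: AP-remainder} forces $P=0$ (this is what the statement's ``$R=0$'' must mean). Consequently $\tfrac{1}{a}QV=f-\Delta g$ is summable, and Lemma~\ref{LEM:degree} applied to the local integral basis $V$ at infinity produces a certificate $\sum c_iV_i$ with $\deg_x(c_i)\le\max\{0,\lambda\}=\delta'$. To place $c$ inside $U$ one further notes that any term of $c$ of Laurent-degree below $\mu$ can be absorbed into $\Delta g$: the $\mu$ floor is determined by $T$, and subtracting the analogous $V$-certificate produced when $\tfrac{1}{e}RW$ is converted in Step~2 truncates the lower end of $c$ into $C[x]_{\mu,\delta'}$.

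The main obstacle is the lower-bound claim in Step~4: Lemma~\ref{LEM:degree} only gives an upper degree bound, so one has to reconcile the Laurent-degree floor $\mu$ with the certificate. The key point will be that $\mu$ depends only on the normalization $T$ of $W$ at infinity, not on $f$, so the freedom in the choice of $g$ is enough to shift the certificate into the prescribed finite-dimensional space $U$.
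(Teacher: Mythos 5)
Your Steps 1--3 match the paper's proof exactly: apply Corollary~\ref{Cor: AP-remainder}, change basis via $T^{-1}$, then iterate the reduction at infinity, landing on the same bounds for $\mu$ and $\delta$. You also correctly spot that the ``$R=0$'' in the theorem's statement should read ``$P=0$''.

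The genuine gap is the one you flagged yourself at the end of Step~4, and your proposed fix does not close it. Lemma~\ref{LEM:degree} only gives the upper bound $\deg_x(c_i)\le\delta'$; it says nothing about the Laurent-degree floor $\mu$, and the vague appeal to ``absorbing'' low-degree terms into $\Delta g$ is not a proof --- there is no argument that such absorption keeps the remaining certificate within $C[x]_{\mu,\delta'}^r$. The step you are missing is Theorem~\ref{Thm: AP-remainder}: since $P=0$, the element $\frac{1}{e}RW$ is itself summable, and Theorem~\ref{Thm: AP-remainder} then guarantees a certificate of the specific form $\Delta(bW)$ with $b\in C[x]^r$ (entries are genuine polynomials, not just rational functions). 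Transferring to the basis $V$ via $bT^{-1}$ introduces at worst the factors $x^{-\tau_i}$, so $bT^{-1}\in x^\mu C[x]^r$, and since $R_1\in C[x]^r\subseteq x^\mu C[x]^r$ (because $\mu\le 0$), the certificate $c=bT^{-1}-R_1$ lies in $x^\mu C[x]^r$. Combining this lower bound with Lemma~\ref{LEM:degree}'s upper bound gives $c\in C[x]_{\mu,\delta'}^r$, which is what puts $\frac{1}{a}QV$ into $\Delta(U)$. Without invoking Theorem~\ref{Thm: AP-remainder} you have no control on the lower end of the certificate, and the claimed finite-dimensionality of the search space $U$ is not justified.
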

\begin{proof}
	After performing the generalized Abramov-Petkov\v sek reduction in Section~\ref{sec: AP}, it follows from Corollary~\ref{Cor: AP-remainder} that
		\begin{equation}\label{EQ:add1}
		f = \Delta g+ \frac{1}{d}PW+\frac{1}{e}RW,
	\end{equation}
	where $g\in A$, $d\in C[x]$, $P,R\in C[x]^r$ with $\deg_x(P)<\deg_x(d)$ and $d\prod_{i=1}^I(x-\beta_i)$ being shift-free. We rewrite the last summand in~\eqref{EQ:add1} with respect to the basis~$V$:
	\begin{equation*}
		\frac{1}{e} RW = \frac{1}{a} \tilde{R}V,
	\end{equation*}
	where $\tilde{R} = x^{\lambda_1}(x+1)^{\lambda_2} R T^{-1} \in x^\mu C[x]^r$. Since $V$ is a local integral basis at infinity, using the reduction at infinity in Section~\ref{sec: reduce degree}, we obtain from~\eqref{EQ:hr-remainder-infty} that
	\begin{equation}\label{EQ:S1S2}
		\frac{1}{e} RW = \Delta(R_1 V) + \frac{1}{a} R_2 V,
	\end{equation}
	where $R_1\in C[x]^r$ and $R_2\in x^\mu C[x]^r$ satisfies
	\[\deg_x\left(\tfrac{R_2}{a}\right) \leq \max\left\{0, \deg_x\left(\tfrac{B}{a}\right)\right\}-1.\] This implies that $\deg_x(R_2)\leq \max\{ \deg_x(a),\deg_x(B)\}-1=\delta$. Thus $R_2\in C[x]_{\mu,\delta}^r$ and we finally obtain the decomposition~\eqref{EQ:add} by setting	$g = \tilde{g} + R_1 V$ and $Q = R_2$.	
	
	For the last assertion, assume that $f$ is summable (the other direction of
	the equivalence holds trivially). Then Corollary~\ref{Cor: AP-remainder} implies that $P=0$. Hence the last
	summand in~\eqref{EQ:add1} is also summable. By Theorem~\ref{Thm: AP-remainder}, there exists $b\in C[x]^r$ such that $\frac{1}{e} RW = \Delta(bW)$. By Equation~\eqref{EQ:S1S2}, we get
	\[\frac{1}{a} QV= \frac{1}{e}RW - \Delta(R_1 V) = \Delta\!\left(\left(bT^{-1} - R_1\right)V\right)=\Delta (cV),\]
	where $c = bT^{-1} - R_1\in x^{\mu}C[x]^r$. Since $V$ is a local integral basis at infinity, it follows from Lemma~\ref{LEM:degree} that $\deg_x(c) \leq \max\left\{0, \deg_x\left(\frac{B}{a}\right)\right\}=\delta'$.
	Thus $c\in C[x]_{\mu,\delta'}^r$.
\end{proof}

The remaining step is to reduce all summable P-recursive sequences to zero. Note that in Theorem~\ref{Thm:add decomp}, $U$ is a $C$-vector space of dimension $r(\delta'-\mu'+1)$. Since $\Delta$ is $C$-linear, it follows that $\Delta(U)$ is also a finite-dimensional $C$-vector space. Using Gaussian elimination, $f$ in~\eqref{EQ:add} can be further decomposed as
\begin{equation}\label{EQ:add2}
	f = \Delta \tilde g + \frac{1}{d} PW + \frac{1}{a} Q_2V,
\end{equation}
where $\tilde g= g+ g_1$ with $g_1\in U$ and $Q_2\in C[x]_{\mu,\delta}^r$ such that $f$ is summable in $A$ if and only if $P=Q_2=0$. This decomposition~\eqref{EQ:add2} is called an \emph{additive decomposition} of~$f$.

\begin{prop}
	We use the same notations as in Theorem~\ref{Thm:add decomp}. The vector $Q_2$ in~\eqref{EQ:add2} belongs to a $C$-vector space of dimension at most
	\begin{equation*}
			r(\delta-\mu+1)=\max\{\deg_x(a),\deg_x(B)\}+\max\{\tau,0\},
	\end{equation*}
	where $\tau=\max\{\tau_1,\ldots, \tau_r\}$.
\end{prop}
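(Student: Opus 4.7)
The statement is essentially a dimension count once the membership $Q_2\in C[x]_{\mu,\delta}^r$ is justified, so my plan is to first secure that membership and then unfold the definitions.

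First I would observe that by Theorem~\ref{Thm:add decomp}, the vector $Q$ in the decomposition~\eqref{EQ:add} already satisfies $Q\in C[x]_{\mu,\delta}^r$. The passage from~\eqref{EQ:add} to~\eqref{EQ:add2} modifies only the second term $\frac{1}{a}QV$ by subtracting an element of $\Delta(U)$: concretely, for any $c\in C[x]_{\mu,\delta'}^r$, the identity $SV = V + \Delta V = \frac{1}{a}(aI+B)V$ yields
\[
\Delta(cV) \;=\; \sigma(c)\,SV - cV \;=\; \tfrac{1}{a}\bigl(a(\sigma(c)-c) + \sigma(c)B\bigr)V,
\]
so $\Delta(U)\subseteq\{\frac{1}{a}\tilde{Q}V : \tilde{Q}\in C[x]_{\mu,\delta}^r\}$ because $\deg_x(\sigma(c)-c)\le\delta'-1$ and $\deg_x(\sigma(c)B)\le\delta'+\deg_x(B)$, both at most $\delta'+\deg_x(a)-1=\delta$ in the cancellation of leading terms (together with the analogous lower-degree bound from the exponents $\tau_i$ hidden in the shape of $V=TW$). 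Consequently, Gaussian elimination can be performed within the $C$-vector space
\[
\S \;:=\; \bigl\{\tfrac{1}{a}QV : Q\in C[x]_{\mu,\delta}^r\bigr\},
\]
and we may indeed arrange that $Q_2\in C[x]_{\mu,\delta}^r$.

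Next I would count the dimension of $C[x]_{\mu,\delta}^r$ over $C$. Each component is a Laurent polynomial in $C[x]_{\mu,\delta}$ with basis $\{x^\mu, x^{\mu+1},\ldots,x^\delta\}$, which has dimension $\delta-\mu+1$. Hence
\[
\dim_C\bigl(C[x]_{\mu,\delta}^r\bigr) \;=\; r\,(\delta-\mu+1).
\]

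Finally I would substitute the explicit values of $\delta$ and $\mu$. Using $\delta=\max\{\deg_x(a),\deg_x(B)\}-1$ and the elementary identity $-\min\{-\tau_1,\dots,-\tau_r,0\}=\max\{\tau_1,\dots,\tau_r,0\}=\max\{\tau,0\}$, we obtain
\[
\delta-\mu+1 \;=\; \bigl(\max\{\deg_x(a),\deg_x(B)\}-1\bigr) + \max\{\tau,0\} + 1 \;=\; \max\{\deg_x(a),\deg_x(B)\} + \max\{\tau,0\},
\]
which combined with the previous dimension count gives the asserted bound.

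The only non-routine point is verifying that $\Delta(U)\subseteq \S$, so that the Gaussian elimination step stays inside $C[x]_{\mu,\delta}^r$; everything else is arithmetic with degrees and exponents.
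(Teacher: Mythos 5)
The paper states this proposition without an accompanying proof, so the dimension count you supply is indeed the content of the omitted argument. Your arithmetic for $\delta-\mu+1$ is correct, and it in fact exposes a typo in the statement as printed: since $\delta-\mu+1=\max\{\deg_x(a),\deg_x(B)\}+\max\{\tau,0\}$, the equality should read $r(\delta-\mu+1)=r\bigl(\max\{\deg_x(a),\deg_x(B)\}+\max\{\tau,0\}\bigr)$, i.e.\ the factor $r$ is missing on the right-hand side. You should have flagged this discrepancy instead of asserting that your computation "gives the asserted bound".

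The genuine gap is precisely in the step you identify as "the only non-routine point": the inclusion $\Delta(U)\subseteq\{\tfrac{1}{a}\tilde QV:\tilde Q\in C[x]_{\mu,\delta}^r\}$ is false in general. Your bound $\deg_x(\sigma(c)B)\le\delta'+\deg_x(B)$ is correct, but it does \emph{not} lie below $\delta=\max\{\deg_x(a),\deg_x(B)\}-1$; whenever $\deg_x(B)\ge\deg_x(a)$ (equivalently, $\lambda\ge 0$) one has $\delta'+\deg_x(B)=2\deg_x(B)-\deg_x(a)>\deg_x(B)-1=\delta$. For a concrete counterexample take $r=1$, $\deg_x a=1$, $\deg_x B=2$, so $\delta'=1$, $\delta=1$; with $c=x\in C[x]_{\mu,1}$ one gets $\Delta(cV)=\tfrac{1}{a}\bigl(a+(x+1)B\bigr)V$, whose numerator has degree $3>\delta$. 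The "cancellation of leading terms" you invoke does not occur: $a(\sigma(c)-c)$ has strictly smaller degree than $\sigma(c)B$ in this regime, so the two summands cannot cancel at the top degree. What actually secures $Q_2\in C[x]_{\mu,\delta}^r$ is a different mechanism: the Gaussian elimination referenced in the paper must be performed not against $\Delta(U)$ directly, but against its image under the reduction-at-infinity projection of Section~\ref{sec: reduce degree}. That is, each basis element $\Delta(cV)$ of $\Delta(U)$ is first reduced at infinity, yielding $\Delta(cV)=\Delta(g_c)+\tfrac{1}{a}Q_cV$ with $Q_c\in C[x]_{\mu,\delta}^r$, and the elimination takes place inside $\{\tfrac{1}{a}QV:Q\in C[x]_{\mu,\delta}^r\}$ against these reduced representatives $\tfrac{1}{a}Q_cV$. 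Since a vector $\tfrac{1}{a}QV$ with $Q\in C[x]_{\mu,\delta}^r$ is already terminal for the reduction at infinity (its coefficients in the $V$-basis have degree $<\max\{0,\lambda\}$), this still gives a correct membership test for $\Delta(U)$, and the resulting $Q_2$ is honestly confined to $C[x]_{\mu,\delta}^r$. With that correction the dimension count you give is exactly the right argument.
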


\section{Creative Telescoping}
Let $K(x)[S_t, S_x]$ with $K=C(t)$ be an Ore algebra, in which $S_t$ and $S_x$ are the shift operators with respect to $t$ and $x$, respectively. Let $J$ be a left ideal of $K(x)[S_t, S_x]$ such that $A=K(x)[S_t, S_x]/J$ is a $K(x)$-vector space of dimension $r$. Assume that there exists a cyclic vector $\gamma$ with respect to $S_x$. This means that $\{\gamma, S_x\gamma, \ldots, S_x^{r-1}\gamma\}$ is a basis of $A$ over $K(x)$. Then $\gamma$ is annihilated by $L$ and $S_t - u_t$ for some $L, u_t\in K(x)[S_x]$. We further assume that $L=\ell_0 + \ell_1S_x+\cdots+\ell_rS_x^r\in K[x][S_x]$ with $\ell_0\ell_r\neq 0$. Every element $f$ in $A$ can be uniquely written as $P_f\gamma+J$ for some $P_f \in K(x)[S_x]$. The map sending $f$ to $P_f+\<L>$ gives an isomorphism from $A$ to $K(x)[S_x]/\<L>$ as a $K(x)[S_x]$-module. Using this isomorphism, for any $f\in A$, we can apply our additive decomposition to test whether $f$ is summable (in $x$). If $f\in A$ is not summable, one can ask to find a nonzero operator $T \in C(t)[S_t]$ (free of $x$) such that $T(f)$ is summable. Such an operator $T$ if it exists is called a \emph{telescoper} for $f$. Applying the additive decomposition with respect to $x$ in Section~\ref{sec:add} to $S_t^i f\in A$ yields that
\begin{equation}
	S_t^if = \Delta_x(g_i)+h_i\quad\text{with}\quad h_i = \frac{1}{d}P_iW + \frac{1}{a
	}Q_iV
\end{equation}
where $g_i\in A$, $d\in K[x]$, $P_i \in K[x]^r$, $Q_i \in K[x]_{\mu,\delta}^r$ with $\deg_x(P_i) < \deg_x(d)$. Moreover, $S_t^if$ is summable in $A$ if and only if $h_i=0$. To compute telescopers for bivariate P-recursive sequences, we are confronted with the same problem as in the hypergeometric case~\cite{chen15a}. The sum of two remainders in the additive decomposition may not be a remainder. Using the same technique as in the hypergeometric case~\cite[Section 5]{chen15a},  one can replace $h_i$ by $\tilde h_i$ such that the linear combination of the $\tilde h_i$'s over $C(t)$ is still a remainder. We shall explain how it works by the following example. If there exist $c_0, c_1,\ldots, c_{m}\in C(t)$ such that $\sum_{i=0}^m c_i \tilde h_i=0$, then $T=\sum_{i=0}^mc_iS_t^i$ is a telescoper for $f$. This approach is the method of reduction-based telescoping and was developed for various classes of functions~\cite{bostan10b,bostan13a,chen15a,chen16a, chen17a, bostan18a, vanderHoeven21,BrochetBruno23,vanderHoeven18b}. 

\begin{example}
	Let $F=x+t^2+\frac{1}{x!}$, which is annihilated by
	\[L =(x+2)(x^2+(t^2+1)x+1)S_x^2 - (x^3+(t^2+5)x^2+(3t^2+7)x+t^2+4)S_x +x^2+(t^2+3)x+t^2+3\]
	and
	\[S_t -\frac{(2t+1)(x+1)}{x^2+(t^2+1)x+1}S_x- \frac{x^2+(t^2+2)x-2t}{x^2+(t^2+1)x+1}.\]
	A suitable basis of $A=K(x)[S_x]/\<L>$ with $K=\set C(t)$ at $\{\beta_1,\beta_2,\beta_3\}$ is
	\[W=(\omega_1,\omega_2)=\left(\frac{(t^2-1)}{x^2+(t^2+1)x+1}((x+t^2)S_x-(x+t^2+1)),\,\frac{1}{x^2+(t^2+1)x+1}(S_x-\frac{x+t^2}{t^2-1})\right),\]
	where $\beta_1=-2$, $\beta_2$ and $\beta_3$ are distinct roots of $x^2+(t^2+1)x+1$. A local integral basis at infinity is $V=(v_1,v_2) = (x\omega_1,\omega_2) $. Then
	\begin{align*}
	&\begin{pmatrix}
		S_x\omega_1\\
		S_x\omega_2
	\end{pmatrix} = \frac{1}{x+2}\begin{pmatrix}
	1&0\\
	-\frac{x+1}{(t^2-1)^2}&x+2
	\end{pmatrix}\begin{pmatrix}
	\omega_1\\
	\omega_2
	\end{pmatrix},
	&\begin{pmatrix}
		S_t\omega_1\\
		S_t\omega_2
	\end{pmatrix} = \begin{pmatrix}
		\frac{t(t+2)}{t^2-1}&0\\
		0&\frac{t^2-1}{t(t+2)}
	\end{pmatrix}\begin{pmatrix}
		\omega_1\\
		\omega_2
	\end{pmatrix}, \\
	&\begin{pmatrix}
		\Delta_xv_1\\
		\Delta_xv_2
	\end{pmatrix} = \frac{1}{x(x+2)}\begin{pmatrix}
		-x^2-x+1&0\\
		-\frac{x+1}{(t^2-1)^2}&0
	\end{pmatrix}\begin{pmatrix}
		v_1\\
		v_2
	\end{pmatrix}, 	
	&\begin{pmatrix}
		S_tv_1\\
		S_tv_2
	\end{pmatrix} = \begin{pmatrix}
		\frac{t(t+2)}{t^2-1}&0\\
		0&\frac{t^2-1}{t(t+2)}
	\end{pmatrix}\begin{pmatrix}
		v_1\\
	    v_2
	\end{pmatrix}.
	\end{align*}
	We compute a minimal telescoper for $\frac{1}{x+t}F$, which corresponds to $f =\frac{1}{x+t}\in A$. Its representation in the bases is
	\begin{align*}
		f&=\frac{1}{x+t}(\omega_1-(t^2-1)(x+t^2)\omega_2)\\
		&=\frac{1}{x+t}(1,-(t-1)^2t(t+1))W+(0,-(t^2-1))V.
	\end{align*}
    After the reduction at infinity, we obtain
    \[f=\Delta_x((\tfrac{1}{t^2-1},-(t^2-1)x)V) +\frac{1}{x+t}(1,-(t-1)^2t(t+1))W+\frac{1}{x(x+2)}(-\tfrac{1}{t^2-1}(x+2),0)V.\]
	Since $a=x(x+2)$, $\mu=-1$, $\delta=1$ and $\delta'=0$, a basis of the space of all summable sequences in the form $\frac{1}{a}QV$ with $Q\in K[x]_{\mu,\delta}^2$ is given by $\Delta_x(\frac{1}{x}v_1)=-\frac{x+1}{x(x+2)}v_1$.
	An additive decomposition of $f$ is
	\[f=\Delta_x(g_0)+h_0\quad\text{with}\quad h_0=\frac{1}{x+t}(1,-(t-1)^2t(t+1))W+\frac{1}{x(x+2)}(-\tfrac{1}{t^2-1},0)V,\]
	where $g_0=(\tfrac{1}{t^2-1},-(t^2-1)x)V+(\frac{1}{(t^2-1)x},0)V$.
	Set $\tilde h_0=h_0$. Next we consider $S_tf$, which has an additive decomposition
	\[S_tf =\Delta_x(S_tg_0)+h_1 \quad\text{with}\quad h_1=\frac{1}{x+t+1}(\tfrac{t(t+2)}{t^2-1},-(t-1)t(t+1)^2)W+\frac{1}{x(x+2)}(-\tfrac{1}{t^2-1},0)V.\]
	The sum of two remainders $h_0$ and $h_1$ is not a remainder, because $(x+t)(x+t+1)$ is not shift-free with respect to $x$. Applying the generalized Abramov-Petkov\v sek reduction to $h_1$ by the formula~\eqref{EQ:AP step 1}, we get
	\[S_tf=\Delta_x(\cdots)+\frac{1}{x+t}(\tfrac{t}{t^2-1}(x+t+2),- (t-1)t(t+1)^2)W+\frac{1}{x(x+2)}(-\tfrac{1}{t^2-1},0)V,\]
	which has another additive decomposition
	\[S_tf=\Delta_x(\tilde g_1)+\tilde h_1\quad\text{with} \quad \tilde h_1=\frac{1}{x+t}(\tfrac{2t}{t^2-1},-(t-1)t(t+1)^2)W+\frac{1}{x(x+2)}(\tfrac{1}{t+1},0)V,\]
	where $\tilde g_1\in A$. Now the sum of $\tilde h_0$ and $\tilde h_1$ is a remainder. Since $\tilde h_0$ and $\tilde h_1$ are linearly independent over $\set C(t)$, we continue with $S_t^2 f$, $S_t^3f$ and compute their refined additive decompositions $S_t^i(f)=\Delta_x(\tilde g_i)+\tilde h_i$ for $i=2,3$, where $\tilde g_2,\tilde g_3\in A$ and
	\begin{align*}
		\tilde h_2&=\frac{1}{x+t}(\tfrac{t^2+2}{t-1},-(t-1)(t+1)^2(t+2))W+\frac{1}{x(x+2)}(-\tfrac{t^2}{t^2-1},0)V,\\
		\tilde h_3&=\frac{1}{x+t}(-\tfrac{(t+2)(t^3-2t-3)}{t^2-1},-(t+2)(t+3)(t^2-1))W+\frac{1}{x(x+2)}(\tfrac{t^2+t-1}{t-1},0)V.
	\end{align*}
	Now we see the remainders $\tilde h_0, \tilde h_1, \tilde h_2, \tilde h_3$ are linearly dependent over $\set C(t)$, which gives rise to a minimal telescoper $(3t^2+3t+2)S_t^3+(3t^3+3t^2-4t-6)S_t^2 -(6t^3+15t^2+13t+2)S_t+3t^3+9t^2+8t$.
\end{example}

\bibliographystyle{plain}
\bibliography{integral}

\end{document}